\newtheorem{proposition}{Proposition}
\newtheorem{lemma}{Lemma}
\newtheorem{assumption}{Assumption}
\DeclareMathOperator*{\argmax}{arg\,max}
\DeclareMathOperator*{\argmin}{arg\,min}
\begin{document}
\title{Pricing for Routing and Flow-Control \\ in Payment Channel Networks}
\author{Suryanarayana Sankagiri and Bruce Hajek, \IEEEmembership{Fellow, IEEE}
\thanks{This work was supported in part by the NSF Grant CCF 19-00636.}
\thanks{Suryanarayana Sankagiri is currently a postdoc in the School of Computer and Communication Sciences at EPFL in Lausanne, Switzerland. (email: suryanarayana.sankagiri@epfl.ch).}
\thanks{Bruce Hajek is the Hoeft chair professor in 
the Electrical and Computer Engineering Department and the Coordinated Science Laboratory, University of Illinois, Urbana-Champaign, IL 
61801 USA (email: b-hajek@illinois.edu).}}

\maketitle

\begin{abstract}
A payment channel network is a blockchain-based overlay mechanism that allows parties to transact more efficiently than directly using the blockchain. These networks are composed of payment channels that carry transactions between pairs of users. Due to its design, a payment channel cannot sustain a net flow of money in either direction indefinitely. Therefore, a payment channel network cannot serve transaction requests arbitrarily over a long period of time. We introduce \emph{DEBT control}, a joint routing and flow-control protocol that guides a payment channel network towards an optimal operating state for any steady-state demand. In this protocol, each channel sets a price for routing transactions through it. Transacting users make flow-control and routing decisions by responding to these prices. A channel updates its price based on the net flow of money through it. We develop the protocol by formulating a network utility maximization problem and solving its dual through gradient descent. We provide convergence guarantees for the protocol and also illustrate its behavior  through simulations.
\end{abstract}

\begin{IEEEkeywords}
Payment channel networks, routing, flow control, network protocols, dual method
\end{IEEEkeywords}

\section{Introduction}\label{sec:introduction}

Blockchains, pioneered by Bitcoin, are systems that support digital transactions in a completely decentralized manner. However, most blockchains have poor transaction throughput, a fundamental limitation that stems from their decentralized design (e.g., Bitcoin processes around ten transactions per second \cite{croman2016scaling}). This low throughput results in exorbitant transaction fees and hinders widespread adoption. To be a viable option in practice, blockchain throughput must scale significantly. \textit{Layer-two blockchain mechanisms} are tools that allow many transactions to take place outside of the underlying blockchain system, thereby increasing the system's throughput. See \cite{gudgeon2020sok} for a comprehensive survey of these methods. A payment channel network (PCN) is one such layer-two mechanism that is used in practice. Recent years have seen considerable research interest on PCNs, with a focus on improving their security as well as their efficiency. This paper focuses on their long-term transaction processing efficiency.

As the name suggests, a PCN is a network composed of multiple \textit{payment channels}. A payment channel is a special account that two parties jointly create by depositing some funds. Once a channel is established, the parties transact by exchanging digitally signed messages between themselves, without recording these transactions on the blockchain. A PCN consists of many such payment channels operating together, allowing users who do not share a channel to direct their transactions through intermediaries. Thus, a PCN facilitates a much larger volume of transactions for the same amount of escrowed funds than what would be possible through standalone payment channels.

Although a payment channel can support indefinitely many transactions, it imposes some constraints.  A steady flow of transactions through a payment channel in one direction depletes the funds available at the source. In the long term, a channel can only sustain a \textit{balanced flow} of transactions; \textit{i.e.}, the flows of money in each direction are equal. These balance constraints apply to each channel in a PCN, potentially limiting the PCN's ability to meet the transaction demands of its users. The central problem we address in this work is how the network can serve the transaction demands it receives to the best extent possible, while maintaining \textit{detailed balanced flows} (balanced flows on each channel). 

In this work, we propose the \textit{DEBT (DEtailed Balance Transaction) control protocol}, a joint flow-control and routing protocol for PCNs. The protocol uses channel prices as a decentralized control mechanism for flows. The prices penalize flows that increase the degree of imbalance through a channel and incentivize flows in the opposite direction. Transacting users calculate the path price (the sum of channel prices along a path) for different paths between them. By selecting the path with the minimum path price to execute the transaction, the users perform \textit{routing}. In addition, they adjust the transaction amount as a function of the minimum price, thereby performing \textit{flow-control}. Channels update their prices over time based on the net flow of transactions through them. 

Our main contribution is to prove that given an arbitrary demand and an arbitrary PCN topology, the prices and the flows under this protocol converge to an optimal value in the following sense. The flows maximize the total utility of all users (plus a regularizer term), subject to the detailed balance condition. Loosely, this translates to the PCN serving as many transactions as possible over the long run. 

We begin this paper by introducing the basic terminology for payment channels in Section \ref{sec:payment_channels} and reviewing the literature on payment channel networks in Section \ref{sec:pcn_review}. With this background, we describe the main contributions of this paper in Section \ref{sec:contribution}. In Section \ref{sec:PCN_model}, we present a discrete-time model of a payment channel network, specifying the order of events in a time slot, the nature of transaction requests, and the feasibility constraints imposed by the network. Section \ref{sec:debt_control_protocol} is devoted to the design of the DEBT control protocol. The protocol's convergence follows under some additional assumptions; this is presented in Section \ref{sec:convergence}. Simulation results of the protocol on some simple PCNs are shown in Section \ref{sec:pcn_simulations}. These examples illustrate how the protocol performs routing and flow-control by reacting to prices. We conclude this paper in Section \ref{sec:discussion} with a discussion of the implications of the assumptions we have made in our analysis and the potential practical impact of our work.

\section{Background and Related Work}\label{sec:background}

\subsection{Payment Channels: Basic Principles}\label{sec:payment_channels}

The idea of a payment channel was born at least as early as 2013 and has since undergone considerable evolution. We refer the interested reader to Section 3 of the survey \cite{gudgeon2020sok} and Chapter 5 of the thesis \cite{tikhomirov2020security} for a historical survey of payment channels. Here, we describe the construction and operation of a payment channel at a level of generality that encompasses several designs. 

\paragraph{Construction and Basic Operation} A payment channel is created by means of a special \textit{funding transaction}, with two nodes depositing some money into a new two-node account (the channel). This funding transaction is recorded on the blockchain. At the time of creation, both nodes also create and privately hold a \textit{commitment transaction} which disburses the money held in the channel back to their personal accounts.
Once a channel is created between two nodes, they can transact by exchanging messages between themselves. Each message is simply a new commitment transaction, \textit{i.e.}, an agreement between the two nodes to split the escrowed fund in a certain portion between the two nodes. None of these transactions are broadcast to the blockchain network. A channel is closed by recording a commitment transaction on the blockchain that returns the appropriate balances to the two nodes' individual accounts. A payment channel also includes some inbuilt safety mechanisms by which an honest node can withdraw its funds even if the other node goes missing or acts maliciously.

\paragraph{Channel Capacity and Balances} The total amount of money escrowed in a channel is called the \textit{capacity} of the channel. The amount of money each node owns within the channel are the \textit{balances} of the two nodes. While the capacity remains constant over time, the balances change with every transaction. The sum of the balances always equals the capacity. At any given point in time, a channel's balances impose a bound on the maximum value of a single transaction that can be made in either direction. To elaborate, consider a channel between two nodes $A$ and $B$, and let their balances be $x_A$ and $x_B$ respectively. Then $A$ cannot pay $B$ more than $x_A$ and $B$ cannot pay $A$ more than $x_B$ amount of money in a single transaction. Usually, the transaction values through a payment channel are much smaller than the channel's capacity. Therefore, if a channel is \textit{balanced}, \textit{i.e.}, the balances are each close to half the capacity, it will be able to execute any transaction request that might arise from either end.

\paragraph{Flows} Consider a scenario where node $A$ pays node $B$ ten dollars each day over the channel $A-B$ (\textit{e.g.}, a customer buying coffee from a cafe every day). This long-term transaction pattern can be represented as a \textit{flow} of money across the channel. More generally, a flow through a channel represents a long-term transfer of money from one node to another. We can use the notion of a flow to represent a series of transactions whose values fluctuate randomly or periodically around some average value.

The notion of flows sheds some light on the choice of the term \textit{channel capacity}, defined above. In the context of communication networks, this term denotes the maximum rate of communication that a channel can support, \textit{i.e.}, the maximum number of bits (or information) that the channel can carry per unit time. For a payment channel, there is a similar interpretation; its capacity, \textit{i.e.,} the total escrowed fund, is proportional to the maximum possible flow rate through the channel (sum of the flows in each of the two directions). Indeed, the flow through a channel is maximized if each node sends its complete balance to the counter-node immediately after the previous transaction is completed and the resulting flow rate is equal to the total escrowed fund per time slot. 

\paragraph{The Balance Condition for Flows} In addition to a bound on the total flow rate, a payment channel also imposes a \textit{balanced flow constraint}, which can be stated as follows. Without persistent on-chain rebalancing (described below), a payment channel cannot sustain a nonzero, constant net flow indefinitely. Suppose, for instance, that on average, $A$ pays $B$ ten dollars per day and $B$ pays $A$ five dollars per day. Over time, $A$'s balance in the channel decreases continuously. Whatever be the channel's capacity, eventually $A$'s balance gets completely exhausted. If $A$'s transactions to $B$ are \textit{elastic} (\textit{i.e.}, $A$ is willing to tolerate a smaller flow rate than its original demand), $A$ could wait for $B$ to make some payments to $A$. Effectively, the flow rate from $A$ to $B$ drops by half to five dollars per day. Thus, the channel settles to a state where it has zero net flow through it. This example illustrates how a channel naturally enforces the balance constraint on flows through it.

\paragraph{On-Chain Rebalancing} If flows are \textit{inelastic}, \textit{i.e.}, the entire demand must be served through the channel, then the channel can reset its balances via an on-chain transaction. One option is that $A$ pays $B$ some amount via an on-chain transaction, with $B$ paying $A$ back the same amount on the channel. If this amount is half the channel capacity, the entire operation would rebalance the channel. Alternatively, the nodes may open a new, perfectly balanced channel between themselves. In summary, with persistent on-chain rebalancing, a channel can support a nonzero net flow. However, on-chain rebalancing is an expensive operation and should be avoided to the extent possible. Moreover, supporting a nonzero net flow through a channel with persistent on-chain rebalancing is equivalent to serving balanced flows on the payment channel and serving the rest of the demand on the blockchain.

\subsection{Routing in Payment Channel Networks: A Review}\label{sec:pcn_review}
Payment channel networks are a set of payment channels operating in cooperation. The most popular such network is the Lightning Network for Bitcoin, proposed by \cite{poon2016bitcoin}. At the time of writing, the network has more than $10, 000$ nodes, $40,000$ channels, and $5,000$ Bitcoin escrowed in the channels (see \url{1ml.com} for live statistics). The Lightning network has played a central role in the growth of small-scale transactions via Bitcoin \cite{river2023lightning}. Alongside the surge in popularity of the Lightning network, there has been a growing academic literature on payment channel networks. Various  aspects of payment channel networks have been investigated, such as their security, privacy, and efficiency \cite{gudgeon2020sok, papadis2020blockchain}. 

In a payment channel network, the channel capacities and the topology of the network are known to all nodes, but the real-time balances are not. While this lack preserves privacy, it poses a significant challenge in even the simplest task of finding a \textit{feasible path} for a transaction \cite{tang2020privacy}. (A path connecting two transacting nodes is said to be feasible if each channel along the path has sufficient balance to execute the transaction.) A naive approach for the path discovery problem, used by the Lightning Network, is the following. One attempts to transact along the shortest path in the network (which can be found through the known topology); if the attempt fails, the faulty edge is removed from the view and again a shortest path is found. With deeper research into routing strategies for PCNs, it was soon established that the naive routing strategy used by the Lightning network is suboptimal.

There are many routing protocols that have been proposed in the literature. We summarize some of the important findings of this literature that are related to our work.
\begin{itemize}
    \item Smaller transactions are more likely to succeed than larger ones, simply because their feasibility requirements are lower. Therefore, it is worth splitting a large transaction into multiple smaller components and sending these along different paths \cite{stasi2018routing, sivaraman2020high}.
    \item A PCN's throughput can suffer from channel congestion. For example, if all nodes choose to transact along the shortest path between them, channels with high centrality (\textit{i.e.,} with many shortest paths through them) may get congested; the total demand through them could exceed their capacity (see discussion in Section \ref{sec:payment_channels}). These channels will be forced to drop transactions because they do not have the balance to serve them \cite{sivaraman2020high, liu2023balanced}.
    \item Another factor affecting a PCN's throughput is that channels can get imbalanced. As mentioned in Section \ref{sec:payment_channels}, a channel through which there is a net flow of money from one end to another eventually gets imbalanced and is unable to support further flows in that direction. An imbalanced channel could potentially stifle out many end-to-end flows, lowering the PCN's throughput \cite{lin2020funds, van2021merchant}.
    \item Smart routing can improve a PCN's throughput by mitigating both congestion and imbalance. A routing protocol can discover paths that are slightly longer than the shortest path where channels are underutilized (or have higher capacity), thereby reducing congestion \cite{liu2023balanced}. Similarly, a routing protocol can also direct transactions along paths in a manner so as to rebalance channels \cite{stasi2018routing, ren2018optimal, lin2020funds, van2021merchant}.
    \item Using channel prices/fees as a means of routing is an economically sound and practically viable approach. Each channel can charge a price that dynamically changes with the extent of imbalance or congestion it faces. In particular, a channel can charge different fees for transactions in the two opposite directions in order to incentivize balanced flows. Nodes (users) are likely to make the rational choice of choosing the cheapest path. This automatically leads to routing decisions that reduce congestion and channel imbalance  \cite{engelmann2017towards, stasi2018routing, ren2018optimal, van2021merchant, varma2021throughput, wang2024fence}.
    \item There is a potential privacy concern that such channel prices can reflect channel balances and thereby leak sensitive transaction information. This concern can be offset by reducing the frequency of price updates or adding some noise to the price (differential privacy style solutions) \cite{tang2020privacy, wang2024fence}.
    \item Choosing the best path by minimizing the transaction fee over all possible paths in the graph is computationally expensive \cite{engelmann2017towards, stasi2018routing, chen2022routing}. Repeating this procedure for every transaction may be infeasible. A practical approach is for nodes to sporadically establish a small set of candidate paths among all possible ones and then optimize the prices only among these paths for every transaction \cite{van2021merchant}.
\end{itemize}

Next, we discuss the few papers that design PCN routing protocols with theoretical guarantees: the Spider protocol \cite{sivaraman2020high}, the protocol of \cite{varma2021throughput}, and Fence \cite{wang2024fence}. These protocols share some common features. First, they aim to reduce congestion and imbalance in PCN channels. Second, they involve channels signaling congestion or imbalance to transacting nodes (via prices or delays), who adjust routing decisions accordingly. Third, their theoretical guarantees stem from connections to network utility maximization problems.

Spider \cite{sivaraman2020high}, inspired by TCP protocols, breaks transactions into packets handled independently and maintains queues at both ends of each channel. A channel queues a packet if it lacks balance to forward it immediately, with large queues signaling congestion/imbalance and leading to delays. Nodes adjust transaction rates based on these delays, reducing rates when delays are high. The efficacy of Spider is demonstrated through extensive simulations. Moreover, \cite{sivaraman2020high} proves that its steady-state flow-rate optimizes a constrained utility maximization problem with logarithmic utilities.  However, this result is shown only for a specific network topology and arbitrarily large demand. Moreover, \cite{sivaraman2020high} does not provide any convergence guarantees.

Inspired by Spider, \cite{varma2021throughput} proposes a protocol that also breaks transactions into packets and maintains queues. However, it replaces delay-based signaling with explicit congestion and imbalance prices. Congestion prices are identical in both directions, while imbalance prices are opposite in sign. The path price is the sum of these two prices over all channels, and each packet takes the lowest-price path. The paper analyzes the protocol under i.i.d. demand and proves that under certain constraints on the mean demand, it stabilizes all queues (executes all transactions) with minimal on-chain rebalancing. The proof technique involves drawing parallels between the protocol and the dual to a constrained throughput maximization problem. However, the results hold only for a circulation demand (see discussion below).

Fence \cite{wang2024fence} also uses prices to manage congestion and imbalance and is based on weighted throughput maximization with constraints. Unlike the long-term flow-based models of \cite{sivaraman2020high} and \cite{varma2021throughput}, Fence adopts a finite-horizon approach, leading to a competitive ratio bound for worst-case demand. However, this result holds only for uni-directional PCNs, and \cite{wang2024fence} shows that no meaningful competitive ratio exists for bidirectional channels. Like \cite{varma2021throughput}, Fence tracks congestion and imbalance but uses an exponential price function instead of a linear one. These steep prices enable the competitive-ratio analysis in \cite{wang2024fence}.

We conclude this section with a short note on the fundamental limitations of any payment channel network in serving demand. Consider a setting where the demand between any two nodes is steady over time and the demand is small enough that capacity constraints are easily met (there is no congestion). The demand is said to be a  \textit{circulation} if each node sends as much money as it receives \cite{sivaraman2020high}. If this is not the case, \textit{i.e.}, some nodes may send more than they receive and vice-versa, the demand is said to be \textit{acyclic} \cite{sivaraman2020high}. Any demand can be split into a circulation component and a leftover acyclic component. In \cite{sivaraman2020high}, it is shown empirically that if the demand is a circulation, a PCN can support the entire demand perpetually without any on-chain rebalancing. Later, \cite{varma2021throughput} analyzes a very similar protocol and theoretically proves this result. In contrast, acyclic demands can be executed in full only with persistent on-chain rebalancing. In summary, a PCN may not always be able to serve the demand in entirety.

The papers \cite{sivaraman2020high} and \cite{sivaraman2021effect} also uncover the phenomenon of \textit{deadlocks}. A deadlock is a scenario where, owing to a few channels being imbalanced, a large number of transactions are rendered infeasible; moreover, they remain infeasible unless the channel undergoes on-chain rebalancing. \cite{sivaraman2021effect} explains that deadlocks may arise when the demand has an acyclic component. In particular, if a PCN tries to serve any transaction request as long as it is feasible, channels can get completely imbalanced, leading to deadlocks.  \cite{sivaraman2021effect} proposes a method to design the topology of the PCN to reduce the prevalence of deadlocks. However, it does not propose a protocol to avoid deadlocks. Preventing deadlocks is important for a PCN to operate efficiently in the long run. 

\section{Our Contribution}\label{sec:contribution}

In this work, we propose a simple routing and flow-control protocol for PCNs, called DEBT control, with provable convergence and optimality guarantees. We provide a brief overview of our protocol's design and analysis here in order to distinguish it from prior work. We assume we are given a payment channel network with an arbitrary topology; this topology is known and fixed. We are also given a set of transacting user-pairs who have some steady, elastic demand between them. Given the network and the demand, we set the objective of the PCN to maximize the total utility of all users, subject to the balance constraint imposed by each channel. The optimal solution to this problem is a set of flows that can be sustained indefinitely by the PCN without any rebalancing. We show that under the DEBT Control protocol, the flows in the network converge to some such optimal solution.

We arrive at the DEBT control protocol using the following methodology, which is inspired by similar works in the domain of communication networks \cite{kelly1998rate, low1999optimization, srikant2013communication, kelly2014stochastic}. We derive the dual of the aforementioned network utility maximization problem and show that the dual is an unconstrained convex optimization problem. We show that the gradient of the dual can be easily calculated, thus making it straightforward to use gradient-descent to solve the dual problem. Next, we show that the gradient-descent method admits a decentralized network implementation; this forms the DEBT control protocol. Endowing the Lagrange multipliers with the interpretation of channel prices gives an intuitive understanding of the protocol. Crucially, we allow channel prices to be negative. Using known results of strong duality (and other standard results in optimization), we show that the protocol converges to an optimal solution for the dual, and consequently, to optimal flows for the original problem. Our theoretical result, as well as the method to obtain it, is novel in the context of payment channel networks. 

Next, we compare our work to the most closely related prior work \cite{sivaraman2020high, varma2021throughput, wang2024fence}. Firstly, our protocol is designed from a long-term, or infinite-horizon point of view. This long-term viewpoint allows us to pose the network utility maximization problem in terms of flows, as done by \cite{sivaraman2020high}, and focus on the asymptotic convergence of the protocol to a steady-state solution. In contrast, the work of \cite{wang2024fence} takes a short-term, or finite-horizon, viewpoint. This difference is reflected in the flavor of the results: while \cite{wang2024fence} prove competitive ratio results, thereby giving guarantees for the worst-case performance over any possible demand, we prove the convergence of our protocol to the optimum flows under steady demands. Our model and result are also different from \cite{varma2021throughput}; it assumes stochastic demands, under which it proves bounds on the queue lengths, whereas we assume deterministic demands, and prove convergence of flows to the optimum value. Finally, our result is more general than the result of \cite{sivaraman2020high} in two ways. First, we show that our protocol converges to a steady-state solution, whereas \cite{sivaraman2020high} does not show any convergence guarantees. Second, our optimality guarantees hold for any network topology, any demand, and a wide class of utility functions, whereas the guarantees in \cite{sivaraman2020high} hold only for parallel networks, infinite demand, and logarithmic utility functions.

Our work also highlights the efficacy of price-based flow-control in preventing deadlocks. As we illustrate in Section \ref{sec:flow_control_example}, our protocol can selectively curb the acyclic component of the demand while allowing the circulation component of the demand to flow. This is indeed necessary for the flows to converge to an optimal steady-state value. It was observed in \cite{sivaraman2020high} that the Spider protocol is unsuccessful in preventing all deadlocks. Our analysis offers an explanation for this observation; we show that using logarithmic utilities does not allow the protocol to completely stifle deadlock-causing acyclic demands. See Section \ref{sec:discussion}A.c. 

\section{A Discrete-Time Model of a Payment Channel Network}\label{sec:PCN_model}
In this section, we present a mathematical model of a payment channel network. We follow the notation used by \cite{sivaraman2021effect}. The network consists of a set of nodes $V$ and a set of channels $E$ between pairs of nodes. The nodes are numbered $1, 2, \ldots, |V|$. A channel connecting nodes $u$ and $v$ is denoted $(u,v)$. We use the convention that the lower index vertex is written first; e.g., in the channel $(u,v)$, $u < v$. Each channel has a certain capacity, which refers to the total amount of money escrowed in the channel. Let $c_{u,v}$ denote the capacity of channel $(u,v)$ and let $c \in \mathbb{R}_+^{E}$ be a vector denoting the capacities of all the channels in the network. Thus, the tuple ($V, E, c$) specifies a weighted, undirected graph. This graph remains unchanged throughout the period of operation.

A PCN is a dynamical system. In this work, we assume the system evolves in discrete time steps. The state of the PCN at any given time is described by the balances in each of the channels. At any given time $t$, let the balance of node $u$ in channel $(u,v)$ be $x_{u,v}[t]$. It follows that the balance of $v$ in the same channel is $c_{u,v} - x_{u,v}[t]$. Let $x[t] \in \mathbb{R}_+^E$ denote the vector of balances, also called the state vector. By convention, $x$ contains the balance of the smaller-indexed node of each channel; the balance at the opposite end of the channel is inferred from its capacity. The state vector always satisfies $0 \leq x[t] \leq c$ (the inequalities hold component-wise).

\subsection{The Nature of Transaction Requests}\label{sec:transaction_requests}
\newcommand{\demand}{a}
\newcommand{\totalflow}{q}
We assume that in each slot, between every source-destination node-pair, a single transaction request arrives. Let $\demand_{i,j}[t]$ denote the monetary value of the transaction request from source $i$ to destination $j$ in slot $t$. This assumption does not sacrifice any generality. Indeed, the case where no transaction is requested is easily modeled by setting $\demand_{i,j}[t] = 0$. Moreover, multiple transaction requests in a time slot can be viewed as a single transaction request whose amount equals the sum of all the individual components. In general, the requested transaction amounts could vary arbitrarily over time. For the sake of simplicity, in this work, we focus on the regime of \textit{constant demands}, \textit{i.e.,} we assume that $\demand_{i,j}[t] = \demand_{i,j}$ for all time $t$ and all transacting node-pairs $(i,j)$. The vector $\demand = (\demand_{i,j})_{(i,j) \in V \times V}$ is called the \textit{demand vector}. We do not make any further assumption about $a$.

A second assumption we make is that the transaction demand arriving to the payment channel network is elastic. In other words, node-pairs prefer to have the entire transaction be served by the PCN, but it is acceptable that the request is dropped or partially served. This is a realistic assumption because users have alternate means of transacting, \textit{e.g.}, on the main blockchain. The choice of the transaction amount, $\totalflow_{i,j}[t]$, in response to the network state, is termed as \textit{flow-control}.

We model elastic transaction requests by means of a utility function. We assume that the node-pair $(i,j)$ gains a utility of $U_{i,j}(\totalflow_{i,j}[t])$ upon being served a transaction of amount $\totalflow_{i,j}[t] \in [0, a_{i,j}]$ by the network. We assume that $U_{i,j}(\cdot)$ is a concave, differentiable, and nondecreasing function over $[0, a_{i,j}]$. We also assume that $U_{i,j}(0) = 0$ and $U'_{i,j}(0) < \infty$. We assume transaction requests are \textit{infinitely divisible}, which means that any fraction of the value $\demand_{i,j}[t]$ can be served in slot $t$ with the rest being dropped. We also assume that transactions are not queued. 

\subsection{The Order of Events}
We model the PCN as a discrete-time system. To be definite, we assume that events take place in a fixed order in each time slot, as described below.
\begin{enumerate}
    \item Transaction requests arrive to the payment channel network. Each transaction request is composed of a source-destination node-pair and a monetary value. These transaction requests depend on exogenous factors, such as demand for certain goods and services, and are independent of the state in the network. 
    \item The nodes make flow-control and routing decisions. That is, for every transaction request, the corresponding source-destination pair decides what fraction of the transaction value should be served; this includes the possibilities of serving the transaction entirely or dropping the transaction entirely. The node pair also decides which paths of the network should carry the transaction--we allow for multi-path routing. These decisions are made on the basis of prices that reflect the network's state; the precise details are given in Section \ref{sec:debt_control_protocol}.
    \item The payment channel network executes the transaction requests made by the nodes by moving the requisite money through the payment channels. After the transactions are served, channel balances are updated. We assume that the channels themselves never drop any transactions, \textit{i.e.}, all transaction requests that are passed on from the nodes to the network are actually served. The details of how this may be achieved in practice is described in Section \ref{sec:flow_feasibility}. 
\end{enumerate}

\subsection{Paths and Flows} A path in the PCN is a sequence of channels, with each one adjacent to the previous one, endowed with a sense of direction. A path may have cycles, but we assume that it traverses each channel at most once. Any such path can be represented by a vector $r \in \{-1, 0, 1\}^{E}$ using the convention described below:
\begin{itemize}
    \item let $r_{u,v} = 1$ if the path traverses the channel $(u,v)$ in the direction $u \rightarrow v$,
    \item let $r_{u,v} = -1$ if the path traverses the channel $(u,v)$ in the direction $v \rightarrow u$, and
     \item let $r_{u,v} = 0$ if the path does not traverse the channel $(u,v)$ in either direction.
\end{itemize}

A source-destination pair $(i,j)$ may use any number of paths in the network for carrying transactions. Denote the $k\textsuperscript{th}$ path by $p_{i, j, k}$ and the set of such paths by $P_{i,j}$. Note that a path from $i$ to $j$ is different from a path from $j$ to $i$. Let $P$ denote the set of all paths ($P = \cup_{i,j} P_{i,j}$). Let $R$ denote the $E \times P$ routing matrix with entries in $\{-1, 0, 1\}$ constructed using the convention given above. In $R$, each column corresponds to a particular path and each row corresponds to a particular channel.

With every path in the network, we associate a flow, which represents the amount of money sent along that path over a period of time. Let $f_{i,j,k}[t]$ denote the amount of money being sent on path $p_{i,j,k}$ in slot $t$. The amounts of money sent from node $i$ to node $j$ along all possible paths in slot $t$ is denoted by $f_{i,j}[t]$, \textit{i.e.}, \(f_{i,j}[t] \triangleq (f_{i,j,k}[t] : 1\leq k \leq |P_{i,j}|) \).  The total amount of money sent from $i$ to $j$ in slot $t$ is denoted by $\totalflow_{i,j}[t]$. Thus \(\totalflow_{i,j}[t] \triangleq \Sigma_k f_{i,j,k}[t].\) Finally, let $f[t] \in \mathbb{R}^{|P|}$ denote the vector of all the flows in the network in slot $t$.

\subsection{Feasibility of Flows and State-Change Equations}\label{sec:flow_feasibility} 
If the channel balances are arbitrary, it is possible that some channels are unable to execute the requested flows through them due to insufficient balance. A set of flows can be served entirely if and only if there is sufficient balance on each side of each edge to route all the flows through in that direction simultaneously. Given a balance vector $x[t]$, a flow vector $f[t]$ is \textit{feasible} if:
\begin{equation}\label{eq:feasibility}
    R^+f[t] \leq x[t]; \quad R^-f[t] \leq c - x[t].
\end{equation}
Here, $R^+$ is the matrix obtained by turning all $-1$s to $0$s in $R$ and $R^-$ is the matrix obtained by turning all $1$s to $0$s and $-1$s to $1$s in $R$. Thus, $R = R^+ - R^-$.

In this work, we make the following two assumptions that ensure that the flow vector at each time step is always feasible. Firstly, the channel capacities must be large enough so that if the channels are evenly balanced, they will be able to serve the entire demand on any routing scheme. We make this notion precise as follows. Given any demand vector $a$, let $\mathcal{N}$ denote the set of transacting node-pairs, \textit{i.e.}, $\mathcal{N} = \{(i,j): a_{i,j} > 0\}$ and let $A$ denote the set of (non-negative) flows satisfying the demand constraints:
\begin{align}
    A \triangleq \{f\in R^{|P|}: f \geq 0, \ \sum_k f_{i,j,k} \leq a_{i,j} \ \forall \ (i,j) \in \mathcal{N}\}.
\end{align}
We assume that the capacities of the channel are large enough such that:
\begin{align}\label{eq:large_capacity_assumption}
    \sup_{f \in A} R^+f \leq c/2, \quad \sup_{f \in A} R^-f \leq c/2
\end{align}
This and other assumptions are discussed in Section \ref{sec:discussion}.

Secondly, whenever a channel observes that the requested flow through it is infeasible, it rebalances itself and then executes the flow. (By the first assumption, the flow becomes feasible after this rebalancing operation.) To keep things simple, we assume that the rebalancing operation takes place instantaneously. In practice, a rebalancing operation can take time; however, a channel can preemptively rebalance itself if it observes that its balances are getting skewed beyond a certain extent.

With this notation in place, we can write the state-update equations as follows. First, the channels undergo rebalancing if necessary. Thus, the intermediate balances are:
\begin{equation}\label{eq:rebalancing}
    \Tilde{x}_{u,v}[t] = \begin{cases}
        c_{u,v}/2 & \text{if  channel $(u,v)$ rebalances} \\
        x_{u,v}[t] & \text{otherwise}
    \end{cases}
\end{equation}
Next, they execute the requested flow. At the end of the slot, the new balances are:
\begin{equation}\label{eq:balance_update}
    x[t+1] = \Tilde{x}[t] - Rf[t].
\end{equation}

\subsection{Detailed Balance Flows}\label{sec:detailed_balance_flows}
The notion of flows as described above is a quantity that is dynamically evolving in time. In contrast, the notion of a \textit{stationary flow} refers to a steady, long-term, exchange of transactions between a pair of nodes in the PCN. (This is similar to the usage in Section \ref{sec:payment_channels}.)  
When referring to stationary flows, we use the notation $f$ (without $t$) to denote the vector of flow rates over each path. We also use the same convention to define $f_{i,j}$ and $\totalflow_{i,j}$.

In Section \ref{sec:payment_channels}, we discussed the balanced flow constraint on a payment channel. This constraint applies to every channel in a payment channel network. A stationary flow vector $f$ is said to satisfy the \textit{detailed balance condition} (equivalently, said to be a detailed balance flow) if \(Rf = 0.\) For any flow vector $f$, the term $(Rf)_{u,v}$ represents the net flow in the direction $u \rightarrow v$ along the channel $(u,v)$; a negative value means that the net flow is in the direction $v \rightarrow u$. A detailed balance flow is such that the amount of money flowing through each channel is equal in the two opposite directions. Thus, after a detailed balance flow has been served, the balances on all the edges remain the same as before (see \eqref{eq:balance_update}). Any detailed balance flow can be sustained in steady-state without any on-chain rebalancing. Conversely, any flow vector that does not satisfy the detailed balance condition can only be sustained in steady-state with persistent on-chain rebalancing. 

\section{The DEBT (DEtailed Balance Transactions) Control Protocol for PCNs}\label{sec:debt_control_protocol}
\subsection{Overview}\label{sec:protocol_overview}
The model described in the previous section provides a framework for discussing network protocols for PCNs. In this section, we present such a protocol which, under any stationary demand, asymptotically maximizes the total utility of all transacting node-pairs while avoiding the expensive operation of periodic on-chain rebalancing. The protocol guides the transacting node-pairs to make flow-control and routing decisions such that the flows in the network ultimately converge to a suitable stationary flow $f^*$. On the one hand, $f^*$ should maximize the sum of all node-pairs' utilities; here, we extend the notion of utility from individual transactions to transaction rates. On the other hand, $f^*$ must be a detailed balance flow and should not exceed the requested demand. We call this protocol the DEBT (DEtailed Balance Transactions) control protocol. 

The key idea of the protocol is to use channel prices as a mechanism to control flows. To elaborate, channels quote prices to nodes for routing flows through them. These channel prices are directional; the price in one direction is always the negative of the other. The price of a path is the sum of the prices of the channels along the path. If multiple paths exist between the same pair of nodes, the protocol recommends choosing a path with the least price or splits the flow along multiple competitive paths. The flow-control decisions are made by comparing the utility gained in having the transaction served to the cost for serving a transaction along a path, where the cost is the product of the price and the transaction volume. A high path price signals the nodes to reduce the flow along the path, while a low path price provides an incentive to increase the corresponding flow.

The DEBT control protocol is an iterative one. Initially, all channel prices are zero; consequently, the path prices are zero as well. Therefore, all transaction requests are served and routing choices are made arbitrarily. Over time, channels adjust their prices based on the net flow through them. A channel $(u,v)$ increases the price in the direction $u \rightarrow v$ if there is a net flow in that direction. This change in the price dissuades further flow in that direction and encourages flow in the opposite direction. As the flows converge to a detailed balance flow, the channel prices begin to converge as well. 

Prior to convergence, the flows along each channel may not be balanced; indeed, without an imbalanced flow, prices would not change at all. An imbalanced flow causes the channel balances to change over time. If a channel's capacity is large enough, the cumulative net flow of transactions throughout the transient period never depletes the channel's balances at either end. In this case, the channel will never undergo rebalancing. Else, a channel might occasionally find it infeasible to route the requested flows due to its skewed balances. Whenever such an occasion arises, a channel undergoes on-chain rebalancing and resets its balances. Eventually, no more rebalancing is required as the flows converge to the flow $f^*$ satisfying the detailed balance condition.

The DEBT control protocol is derived in the following steps:
\begin{enumerate}
    \item The objective of the protocol is posed as a \textit{network utility maximization} problem, which we call the primal problem. An optimal set of flows is one which maximizes the sum of all the node-pair's utilities, subject to the constraint that flows satisfy the detailed balance condition and the flow rates served do not exceed the desired flow rates.
    \item The dual problem corresponding to the primal problem is derived by introducing Lagrange multipliers for the detailed balanced constraint. By the theory of strong duality for convex programs, a solution to the dual problem provides a solution to the primal one.
    \item An iterative gradient-descent algorithm is proposed to solve the dual problem. The algorithm updates the Lagrange multipliers (dual variables) in small steps and the flows (primal variables) are set in response.
    \item The above algorithm is shown to have a decentralized implementation, suitable for implementation on a PCN. The Lagrange multipliers are interpreted as prices quoted by channels. This decentralized implementation constitutes the DEBT control protocol.
\end{enumerate}

\subsection{A Network Utility Maximization Problem}
To formulate the protocol's objective in mathematical terms, we introduce some notation. Let $U(f) = \sum_{(i,j) \in \mathcal{N}} U_{i,j}(\totalflow_{i,j})$ denote the total utility of all transacting node-pairs as a function of a stationary flow $f$. Define a \textit{feasible flow} to be any flow that meets both the demand constraints and the detailed balance constraints (\textit{i.e.}, the condition $Rf = 0$).

As we will see later, it will be important in our proofs that the flows in the network respond smoothly to the prices. For this reason, we add a quadratic regularizer term, $H(f)$, to the utility function, where
\begin{equation}\label{eq:entropy_regularizer}
    H(f) \triangleq -\sum\nolimits_{(i,j) \in \mathcal{N}}\eta_{i,j} \sum\nolimits_{k=1}^{|P_{i,j}|} (f_{i,j,k})^2.
\end{equation}
Here, $\eta_{i,j}$ is a non-negative scalar term that controls the weight of the regularizer. One interpretation of the regularizer is that it is an incentive for splitting the flow among different competitive paths (see Section \ref{sec:prices} for details). Define $U(f) + H(f)$ to be the \textit{net utility} of a particular flow $f$.

The protocol's goal is to find a feasible stationary flow that maximizes the net utility of the payment channel network. In mathematical terms, this can be expressed as obtaining a solution to the following optimization problem:
\begin{equation}\label{eq:primal_problem}
\tag{$\mathbf{P}$}
\begin{split}
    \max_{f \in A} \quad & U(f) + H(f)\\ 
    \text{s.t.} \quad &Rf = 0 
\end{split}
\end{equation}
The symbol $\eqref{eq:primal_problem}$ denotes that the optimization problem presented above is the \textit{primal} (or original) problem. Let $f^*$ denote any solution to this problem.

Observe that the set of feasible flows is a compact, convex set. Moreover, it is nonempty for any problem parameters, since the empty flow ($f = 0$) is a feasible flow. Therefore, a solution to $\eqref{eq:primal_problem}$ always exists. Also note that \eqref{eq:primal_problem} is a convex optimization problem, since both $U(f)$ and $H(f)$ are concave and the constraint set is convex. Lastly, if all $\eta_{i,j}$ are strictly positive, then the objective function is strongly concave. This ensures that $f^*$ is unique.

\subsection{The Dual Problem}\label{sec:dual_problem}
The primal problem \eqref{eq:primal_problem} is hard to solve because of the detailed balance constraints. With the aid of Lagrange multipliers, we derive its \textit{dual problem} that does not explicitly have these constraints. (See \cite{bertsekas1999nonlinear} for an exposition on Lagrange multipliers and duality).
Let $\lambda_{u,v}$ denote the Lagrange multiplier for the constraint $(Rf)_{u,v} = 0$; let $\lambda \in \mathbb{R}^E$ denote the vector of all such terms. Define the Lagrangian of the problem $\eqref{eq:primal_problem}$ by
\begin{align}\label{eq:lagrangian}
    L(f, \lambda) \triangleq U(f) + H(f) - \lambda^TRf.
\end{align}

Using the Lagrangian, we can formulate an equivalent form of the problem \eqref{eq:primal_problem} as follows:
\begin{align}\label{eq:lagrangian_max_min}
    &\max_{f \in A} \  \inf_{\lambda \in \mathbb{R}^E}\quad L(f, \lambda) \nonumber \\ 
    = &\max_{f \in A} \  \inf_{\lambda \in \mathbb{R}^E}\quad  U(f) + H(f) - \lambda^TRf
\end{align}
Observe that \eqref{eq:lagrangian_max_min} is equivalent to \eqref{eq:primal_problem} because 
$f$ must be chosen such that $Rf = 0$ if \eqref{eq:lagrangian_max_min} is to have a finite value. 

The dual of the optimization problem \eqref{eq:primal_problem} 
is obtained by changing the order of minimization and maximization in \eqref{eq:lagrangian_max_min}:
\begin{align}\label{eq:dual_problem}
    &\inf_{\lambda \in \mathbb{R}^E} \  \max_{f \in A}\quad L(f, \lambda) 
    = \inf_{\lambda \in \mathbb{R}^E} D(\lambda), \tag{$\mathbf{D}$}
\end{align}
where $D(\lambda)$, called the \textit{dual function}, is defined as follows:
\begin{align}\label{eq:dual_function}
    D(\lambda) \triangleq \max_{f \in A} U(f) + H(f) - \lambda^TRf \ \  \forall \, \lambda \, \in \, \mathbb{R}^{E}.
\end{align}
For any $\lambda$, $L(f, \lambda)$ is finite for all $f \in A$, because $A$ is a bounded set. Therefore, $D(\lambda)$ is well-defined for all $\lambda \in \mathbb{R}^{E}$. 

Observe that the dual function is a convex function of $\lambda$ \cite{bertsekas1999nonlinear} and that the dual problem has no constraints. This implies that the dual problem is easy to solve. In Section \ref{sec:convergence}, we show how a solution of \eqref{eq:dual_problem} yields a solution of \eqref{eq:primal_problem}.

\subsection{A Dual Algorithm}\label{sec:dual_algorithm}

The gradient descent method is a classical method to solve unconstrained convex optimization problems. To use this method to solve \eqref{eq:dual_problem}, we need to establish conditions under which $D(\lambda)$ is differentiable and also obtain an expression for the gradient of $D(\lambda)$. Lemma \ref{lem:dual_subgradient} gives us an expression for the subdifferential of $D(\lambda)$. Because $D(\lambda)$ is a convex function, the subdifferential set is nonempty at all points. $D(\lambda)$ is differentiable precisely at those points where the subdifferential set has a unique element. The lemma follows immediately from Danskin's theorem, also known as the envelope theorem. (See Appendix B of \cite{bertsekas1999nonlinear} for the precise statement and proof of Danskin's theorem).

\begin{lemma}\label{lem:dual_subgradient} 
Let $D(\lambda)$ be the function as defined in \eqref{eq:dual_function}. The subdifferential set of $D(\lambda)$ is given by
\[\partial D(\lambda) = \left\{ \nabla_\lambda L(f, \lambda): f \in F(\lambda) \right\} = \left\{ -Rf: f \in F(\lambda) \right\}\]
where $F(\lambda) \triangleq \arg \max_{f \in A} L(f, \lambda)$ is the set of all flow vectors that maximize the Lagrangian, given $\lambda$.
\end{lemma}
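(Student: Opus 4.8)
The plan is to invoke Danskin's theorem directly, treating $\lambda$ as the variable in which the outer function is convex and $f$ as the inner maximization variable. First I would verify the hypotheses of Danskin's theorem as stated in Appendix B of \cite{bertsekas1999nonlinear}. The inner objective is $L(f,\lambda) = U(f) + H(f) - \lambda^T R f$. For each fixed $f$ this is an affine—hence convex—function of $\lambda$; the feasible set $A$ is compact; and $L$ is jointly continuous in $(f,\lambda)$. Since $\nabla_\lambda L(f,\lambda) = -Rf$ (the terms $U(f)$ and $H(f)$ do not depend on $\lambda$) is continuous in $f$, the differentiable form of Danskin's theorem applies and yields
\[
\partial D(\lambda) = \mathrm{conv}\{\nabla_\lambda L(f,\lambda) : f \in F(\lambda)\} = \mathrm{conv}\{-Rf : f \in F(\lambda)\},
\]
where $F(\lambda) = \argmax_{f \in A} L(f,\lambda)$. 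This already establishes both set descriptions in the lemma, modulo the presence of the convex hull.

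The one step requiring a short argument beyond citing Danskin is the removal of the convex hull, i.e., showing that $\{-Rf : f \in F(\lambda)\}$ is itself convex. To this end I would first note that $L(\cdot,\lambda)$ is concave in $f$: $U$ is concave as a sum of the concave $U_{i,j}$, the regularizer $H$ is concave since it is a nonnegatively weighted negative sum of squares, and $-\lambda^T R f$ is linear in $f$. The maximizer set $F(\lambda)$ of a concave function over the convex set $A$ is therefore convex. Finally, $\{-Rf : f \in F(\lambda)\}$ is the image of the convex set $F(\lambda)$ under the linear map $f \mapsto -Rf$, and linear images of convex sets are convex. Hence the convex hull in the Danskin expression is redundant, giving $\partial D(\lambda) = \{-Rf : f \in F(\lambda)\}$ as claimed.

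I expect the application of Danskin's theorem to be entirely routine, so the only point deserving genuine care is the convexity argument that lets us drop the $\mathrm{conv}$ operation; this is the mild ``main obstacle.'' It hinges on concavity of $L(\cdot,\lambda)$—which relies on concavity of both $U$ and $H$—together with the convexity of $A$ and linearity of $f \mapsto -Rf$. As a consistency check, when some $\eta_{i,j}=0$ the maximizer set $F(\lambda)$ need not be a singleton, so $\{-Rf : f \in F(\lambda)\}$ may be a nondegenerate convex set rather than a point; this is precisely the situation in which $D$ fails to be differentiable at $\lambda$, matching the discussion preceding the lemma statement.
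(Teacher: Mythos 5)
Your proof is correct and takes essentially the same route as the paper, whose entire proof is the remark that the lemma ``follows immediately from Danskin's theorem'' (Appendix B of \cite{bertsekas1999nonlinear}). Your extra step justifying removal of the convex hull---concavity of $L(\cdot,\lambda)$ over the convex set $A$ makes $F(\lambda)$ convex, and its image under the linear map $f \mapsto -Rf$ is therefore convex, so $\mathrm{conv}$ is redundant---is precisely the detail the paper leaves implicit, and it is needed for the lemma's statement to hold exactly as written.
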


In Section \ref{sec:convergence}, we show that as long as the regularizer coefficient $\eta_{i,j}$ is strictly positive for all $(i,j) \in \mathcal{N}$, $F(\lambda)$ has a unique value for all $\lambda \in \mathbb{R}^E$. By Lemma \ref{lem:dual_subgradient}, $D(\lambda)$ is differentiable everywhere whenever this condition holds.

The gradient descent algorithm to solve the dual problem is presented below. Initialize the algorithm by setting $\lambda[0]$ to be the zero vector. For every $t \in \mathbb{N}$, set
\begin{equation}\label{eq:algorithm}\tag{$\mathbf{A}$}
\begin{split}
    f[t] &= \arg \max_{f \in A} L(f, \lambda[t]) \\
    \lambda[t+1] &= \lambda[t] + \gamma Rf[t]
\end{split}
\end{equation}
Here, $\gamma$ is a strictly positive stepsize parameter in the algorithm that remains constant for all time. In each iteration $t$, the flows $f[t]$ are set so as to maximize the Lagrangian, given the current values of $\lambda [t]$, while the Lagrange multipliers $\lambda[t+1]$ are updated in a direction opposite to the gradient of $D(\lambda[t])$. In case there is more than one value of $f$ that maximizes $L(f, \lambda[t])$, we can set $f[t]$ to any such value. In this case, algorithm \eqref{eq:algorithm} is equivalent to the subgradient method applied to $D(\lambda)$. Under appropriate conditions, we expect $\lambda[t]$ to approach a solution of \eqref{eq:dual_problem}, and consequently, $f[t]$ to approach a solution of \eqref{eq:primal_problem}. We defer the convergence analysis of \eqref{eq:algorithm} to Section \ref{sec:convergence}. For now, we explain how \eqref{eq:algorithm} can be implemented as a network protocol in a PCN.

\subsection{Lagrange Multipliers as Prices}\label{sec:prices}
The first step towards an intuitive interpretation of \eqref{eq:algorithm} is to observe that the Lagrangian is a sum of terms, each concerning one transacting node-pair. Observe that the Lagrangian depends on $\lambda$ only through the term $R^T \lambda$ (see \eqref{eq:lagrangian}). Define $\mu \triangleq R^T\lambda$. Then $\mu$ is a vector indexed by the paths in the PCN, such that
\begin{equation}\label{eq:path_price}
    \mu_{i,j,k} = \sum\nolimits_{u \rightarrow v \in p_{i,j,k}} \lambda_{u,v} - \sum\nolimits_{v \rightarrow u \in p_{i,j,k}} \lambda_{u,v}.
\end{equation}
Similar to the notation $f_{i,j}$, define $\mu_{i,j} \triangleq (\mu_{i,j,1}, \ldots, \mu_{i,j,k})$. Further, define $\Tilde{L}(f, \mu)$ to be 
\begin{align}\label{eq:lagrangian_f_mu}
    &\Tilde{L}(f, \mu) \triangleq \sum\nolimits_{(i,j) \in \mathcal{N}} L_{i,j}(f_{i,j}, \mu_{i,j}), \text{ where } \\
    &L_{i,j}(f_{i,j}, \mu_{i,j}) \triangleq U_{i,j}(\totalflow_{i,j}) - \sum_{k = 1}^ {|P_{i,j}|}\left(f_{i,j,k}\mu_{i,j,k} \,+ \eta_{i,j}(f_{i,j,k})^2\right) \nonumber
\end{align}

Next, observe that in \eqref{eq:algorithm}, the flows are chosen by solving:
\[ f[t] = \argmax_{f \in A} L(f, \lambda[t])  = \Tilde{L}(f, \mu[t]) \, ; \quad \mu[t] = R^T\lambda[t].\]
The expression of $\Tilde{L}(f, \mu)$ given in \eqref{eq:lagrangian_f_mu} shows that this optimization problem also separates across node-pairs. Therefore, given $\mu$, the flows between each $(i,j) \in \mathcal{N}$ can be determined independently by solving:
\begin{align}
\label{eq:flow_opt1}
    f_{i,j}[t] = \argmax_{\{f_{i,j}: \, f_{i,j,k} \geq 0 \, \forall \, k, \ q_{i,j} \leq a_{i,j}\}} \  L_{i,j}(f_{i,j}, \mu_{i,j}[t])
\end{align}

The next step towards interpreting algorithm \eqref{eq:algorithm} is to endow the Lagrange multipliers with some meaning. 
Let $\lambda_{u,v}$ be interpreted as the \textit{channel price}, i.e., the cost of routing one unit of flow in the direction $u \rightarrow v$ through the channel $(u,v)$. Let the price for routing flows in the opposite direction of the same channel be $-\lambda_{u,v}$. Then $\mu_{i,j,k}$ can be interpreted as the \textit{path price}, i.e., the cost that the node pair $(i,j)$ needs to pay to send a unit flow along the path $p_{i,j,k}$. The path price is equal to the sum of the channel prices of each of the channels in the path with the appropriate direction (or sign) incorporated, as shown in \eqref{eq:path_price}. Note that the channel prices, and therefore the path prices, may be negative. 

With this interpretation, $f_{i,j,k} \mu_{i,j,k}$ is the cost of sending a flow of amount $f_{i,j,k}$ along the path $p_{i,j,k}$. Thus, $\sum_k f_{i,j,k} \mu_{i,j,k}$ is the total cost incurred by the node-pair $(i,j)$ for splitting the total flow amount $\totalflow_{i,j}$ along different paths. In  $L_{i,j}(f_{i,j}, \mu_{i,j})$, this cost is subtracted from the utility gained by the node-pair $(i,j)$ in executing a transaction of amount $\totalflow_{i,j}$  (see \eqref{eq:lagrangian_f_mu}). The quadratic term $\eta_{i,j} \sum_{k} (f_{i,j,k})^2$, coming from the regularizer, can be interpreted as a penalty for concentrating all flows along a single path. Equivalently, it acts as an incentive to split the total flow along different paths. This is because for any fixed value of $\totalflow_{i,j}$, the sum $ \sum_{k} (f_{i,j,k})^2$ is minimized by splitting $\totalflow_{i,j}$ equally among all $f_{i,j,k}$. Thus, the interpretation of the optimization problem in \eqref{eq:flow_opt1} is that each node-pair tries to maximize its net utility, i.e., the utility of executing a transaction with the cost of execution subtracted from it. This is a rational decision for the nodes.

\subsection{Flow-Control and Routing Decisions} \label{sec:flow_control_routing} 
The previous section establishes that each node-pair can solve \eqref{eq:flow_opt1} in parallel, independently of each other. We now show how solving \eqref{eq:flow_opt1} can be interpreted as simultaneously making routing and flow-control decisions. First, consider the case when $\eta_{i,j}$ is zero. In this case, it is optimal to route the transaction only along the path with the minimum price; all other paths from $i$ to $j$ carry zero flow. Let $\mu^*_{i,j}[t]$ denote the minimum path price. The amount of flow carried by this path, $\totalflow_{i,j}[t]$, is given by: 
\begin{equation}\label{eq:scalar_flow}
    \totalflow_{i,j}[t] = \argmax_{\totalflow \in [0, a_{i,j}]}U_{i,j}(\totalflow) - \totalflow \mu^*_{i,j}[t].
\end{equation}
The choice of the total amount of flow is interpreted as a flow-control action and the choice of the path to carry the flow is interpreted as a routing decision.

Now consider the case where the regularizer coefficient $\eta_{i,j}$ is strictly positive. When there is a single path between $i$ and $j$, the quadratic regularizer term can be absorbed into the utility function. Thus, the problem in \eqref{eq:flow_opt1} reduces to \eqref{eq:scalar_flow}. When there are multiple paths from $i$ to $j$, the solution to \eqref{eq:flow_opt1} can be expressed in terms of the classical \textit{waterfilling scheme}. To illustrate this, we invoke the idea of Lagrange multipliers once again to deal with the constraints in \eqref{eq:flow_opt1}. Let $\nu_{i,j}$ denote the Lagrange multiplier for the demand constraint $\totalflow_{i,j} \leq a_{i,j}$. By the KKT conditions \cite{bertsekas1999nonlinear}, the optimal solution to \eqref{eq:flow_opt1} must satisfy:
\begin{align}\label{eq:waterfilling}
f_{i,j,k} = \left(\frac{U'_{i,j}(\totalflow_{i,j}) - \nu_{i,j} - \mu_{i,j,k}}{2\eta_{i,j}}\right) ^+ \ \forall \ k.
\end{align}
The total flow $\totalflow_{i,j}$ must also satisfy the demand constraint ($\totalflow_{i,j} \leq a_{i,j}$) and the  complementary slackness condition: $(\totalflow_{i,j} - a_{i,j})\nu_{i,j} = 0$. Lastly, each $\nu_{i,j}$ must be nonnegative.

Interpreting \eqref{eq:waterfilling} as a waterfilling scheme is easiest when the utility function is linear and the optimal solution satisfies $\totalflow_{i,j} < a_{i,j}$. In this case, $\nu_{i,j}$ is equal to zero and $U'_{i,j}(\totalflow_{i,j})$ is constant; call it $U'_{i,j}$. The term $U'_{i,j}$ acts as a \textit{price ceiling}; any path with a price larger than this price ceiling does not carry any flow. The remaining paths carry a flow that is proportional to the gap between the path price and the price ceiling. Thus, the path with the least price gets allotted the maximum flow and the other paths get progressively smaller flows. Paths with the same price always get allotted the same amount of flow.

Now consider the general case. 
The method to solve \eqref{eq:waterfilling} is based on the following observation: increasing either $\totalflow_{i,j}$ or $\nu_{i,j}$ in \eqref{eq:waterfilling} tends to decrease the individual flows $f_{i,j,k}$. 
If all the path prices are bigger than $U'_{i,j}(0)$, then the optimal solution is to set all the flows equal to zero. If not, set $\totalflow_{i,j}$ to $\demand_{i,j}$ and $\nu_{i,j} = 0$ in \eqref{eq:waterfilling} and check whether the corresponding flows, $f_{i,j,k}$, add up to a value more than $\demand_{i,j}$. If so, keep $\totalflow_{i,j} = \demand_{i,j}$ and increase $\nu_{i,j}$ to the value such that the corresponding flows add up exactly to $\demand_{i,j}$. If not, keep $\nu_{i,j} = 0$ and find the appropriate value of $\totalflow_{i,j}$ using a few iterations of binary search as follows. For any $\totalflow_{i,j} \in (0, \demand_{i,j})$, check whether the flows given by \eqref{eq:waterfilling} add up to less than $\totalflow_{i,j}$ or not, and adjust $\totalflow_{i,j}$ accordingly. Once $\totalflow_{i,j}$ is fixed, the flows on the individual paths are given by \eqref{eq:waterfilling}.

\subsection{From a Dual Algorithm to the DEBT Control Protocol}\label{sec:algorithm_to_protocol}

We now describe how algorithm \eqref{eq:algorithm} can be implemented in a decentralized fashion in a payment channel network whose model we presented in Section \ref{sec:PCN_model}. This decentralized implementation is the DEBT control protocol for PCNs. {At every slot $t$:
\begin{enumerate}
    \item The channel prices, $\lambda[t]$, are made publicly available. Using these prices and the knowledge of the graph topology, transacting node pairs calculate the prices along their prospective paths according to \eqref{eq:path_price}. 
    \item The demand arrives to the network. Each transacting node pair $(i,j)$ knows its demand $a_{i,j}$ and its path prices $\mu_{i,j,k}[t]$ (from the previous step). Using these, each node pair calculates the flow $f_{i,j,k}[t]$ to be requested along each path, using the waterfilling scheme \eqref{eq:waterfilling}. In Section \ref{sec:flow_control_routing}, we noted that each node-pair can perform this calculation independently.
    \item The flow requests are conveyed to the respective channels. Each channel checks whether the flow requested through it is feasible or not (whether it has sufficient balance to execute the flow). If it does, it immediately executes the flow; if not, it rebalances itself and then executes the flow. In either case, the flow requests made at slot $t$ are executed within the same slot. Following this, the balances of the channels change according to \eqref{eq:rebalancing}-\eqref{eq:balance_update}.
    \item Finally, each channel updates its price proportional to the net flow through it:
    $$ \lambda_{u,v}[t+1] = \lambda_{u,v}[t] + \gamma (Rf)_{u,v}[t] \quad \forall \ (u,v) \in E $$
    The price updates are based on local information alone; thus, each channel updates its price independently. In fact, channels need not know the source or the destination of the flows that they are serving to calculate prices.
\end{enumerate}}

The term $\lambda_{u,v}$ tends to increase if the net flow through the channel $(u,v)$ is in the direction from $u \rightarrow v$, and tends to decrease if the net flow is in the opposite direction. This change is consistent with the interpretation of $\lambda_{u,v}$ as a price that penalizes or encourages flows in order to maintain detailed balance. Indeed, a sustained net flow in either direction is bound to increase the price for any future flow in that direction. It also decreases the price for any flow in the opposite direction, thereby encouraging such flows. Note that when a channel rebalances itself, it does not reset its prices. Thus, the price of a channel may not exactly reflect its balances; rather, it is proportional to the net amount of money that a channel has carried through itself. The prices keep adjusting in a manner such that eventually, the net flow through each channel converges to zero. In the following section, we show that the flows not only converge to a detailed balance flow, but in fact, they converge to a solution of \eqref{eq:primal_problem}.

\newcommand{\N}{{\mathcal{N}}}
\newcommand{\D}{{\mathcal{D}}}
\newcommand{\Hc}{{\mathcal{H}}}
\newcommand{\bmu}{{\boldsymbol\mu}}
\newcommand{\btheta}{{\boldsymbol\theta}}
\newcommand{\tlim}{{t \rightarrow \infty}}

\section{Convergence Analysis}\label{sec:convergence}

Our first result establishes that the dual problem always has a (finite) solution, and the optimal flow in response to such a solution is primal optimal.
\begin{lemma}\label{lem:strong_duality} For any instance of the primal problem \eqref{eq:primal_problem}, the corresponding dual problem \eqref{eq:dual_problem} has a solution, \textit{i.e.}, there exists $\lambda^* \in \mathbb{R}^E$ such that 
\(D(\lambda^*) = D^* \triangleq \inf_{\lambda \in \mathbb{R}^E} D(\lambda). \)
Further, the set $F(\lambda^*) = \argmax_{f \in A} L(f, \lambda^*)$ contains a solution to \eqref{eq:primal_problem}.
\end{lemma}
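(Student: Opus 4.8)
The plan is to prove two things: (i) strong duality together with \emph{attainment} of the dual infimum, i.e.\ the existence of a minimizer $\lambda^*$ of $D$; and (ii) the recovery statement that $F(\lambda^*)$ contains a primal optimum. The real work is in (i); (ii) is then almost immediate. The problem is to maximize the concave function $g:=U+H$ over the \emph{polytope} $A$ subject to the \emph{affine} equality constraints $Rf=0$. It is feasible (since $f=0$ lies in $A$ and $R0=0$) and has finite optimal value $p^*$ (because $A$ is compact and $g$ is continuous), so weak duality gives $D^*\geq p^*$ and the question is whether $\inf_\lambda D(\lambda)$ equals $p^*$ and is achieved.

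The one subtlety I would flag is that the natural Slater-type constraint qualification can fail here: for an acyclic demand the only detailed-balance flow may be $f=0$, which is a vertex of $A$, so there need not be any feasible point in the relative interior of $A$. I would therefore not invoke a relative-interior qualification. Instead, the key is the assumption $U'_{i,j}(0)<\infty$: together with concavity (which forces $0\le U'_{i,j}(a_{i,j})\le U'_{i,j}(0)<\infty$), it lets me extend each $U_{i,j}$ to a \emph{finite} concave function on all of $\mathbb{R}$ by attaching affine pieces of slope $U'_{i,j}(0)$ and $U'_{i,j}(a_{i,j})$ outside $[0,a_{i,j}]$. This extends $g$ to a real-valued concave function on $\mathbb{R}^P$, and because the primal still maximizes only over $f\in A$, neither the primal problem nor $D(\lambda)=\max_{f\in A}L(f,\lambda)$ is changed. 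With an everywhere-finite objective, a polyhedral constraint set, and affine equality constraints, the standard strong-duality theorem for convex programs with polyhedral data (see \cite{bertsekas1999nonlinear}) applies and yields both $D^*=p^*$ and a dual optimal $\lambda^*$ with $D(\lambda^*)=D^*$; no Slater condition is required once every constraint is affine/polyhedral and the objective is finite-valued.

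Given $\lambda^*$, the recovery step is short. A primal-optimal $\bar f$ exists because $A$ is compact, the feasible set $\{f\in A:Rf=0\}$ is nonempty and closed, and $g$ is continuous. Since $R\bar f=0$, I compute $L(\bar f,\lambda^*)=U(\bar f)+H(\bar f)-\lambda^{*T}R\bar f=p^*-0=p^*=D^*=D(\lambda^*)=\max_{f\in A}L(f,\lambda^*)$, where the middle equality is strong duality. Hence $\bar f$ itself is a maximizer of $L(\cdot,\lambda^*)$ over $A$, so $\bar f\in F(\lambda^*)$, which establishes the second claim.

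I expect the attainment step to be the main obstacle, precisely because it must be justified without Slater. Conceptually, the only way the infimum of the everywhere-finite convex function $D$ could fail to be attained is if the value function $b\mapsto\max\{g(f):f\in A,\ Rf=b\}$ had a vertical (infinite-slope) tangent at $b=0$; finiteness of the slopes of $g$, guaranteed exactly by $U'_{i,j}(0)<\infty$, rules this out, making the value function Lipschitz at $0$ (via Lipschitzness of $g$ on $A$ and the Lipschitz dependence of the polyhedral feasible slice on $b$), hence superdifferentiable there, with any supergradient a dual optimum. I would emphasize that this assumption is essential rather than cosmetic: for $U(f)=\sqrt{f}$ on a single-edge acyclic demand one has $U'(0)=\infty$, and then $\inf_\lambda D(\lambda)$ is approached only as $\lambda\to\infty$ and is never attained.
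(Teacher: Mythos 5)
Your proposal is correct, and its core coincides with the paper's own proof: both arguments handle the failure of Slater-type conditions by extending each $U_{i,j}$ affinely outside $[0,a_{i,j}]$ (this is exactly where $U'_{i,j}(0)<\infty$ is used) so that the objective becomes a finite concave function on all of $\mathbb{R}^P$, and then invoke the strong-duality/dual-attainment result for convex programs with polyhedral constraint sets and affine equality constraints (Proposition 3.4.2 of \cite{bertsekas1999nonlinear}). The only genuine divergence is the recovery step. The paper argues via the subdifferential: since $\lambda^*$ minimizes $D$, we have $0\in\partial D(\lambda^*)$, and Lemma \ref{lem:dual_subgradient} (Danskin's theorem) then produces an $f^*\in F(\lambda^*)$ with $Rf^*=0$, which is primal optimal by part (b) of the same proposition. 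You instead take a primal optimum $\bar f$ (which exists by compactness) and verify $\bar f\in F(\lambda^*)$ directly from the chain $L(\bar f,\lambda^*)=p^*=D^*=D(\lambda^*)=\max_{f\in A}L(f,\lambda^*)$. Both are valid: your version leans explicitly on the zero-gap equality $p^*=D^*$ but needs no subdifferential formula, and it shows the slightly stronger fact that \emph{every} primal optimum lies in $F(\lambda^*)$; the paper's version never writes down $p^*=D^*$ but requires Lemma \ref{lem:dual_subgradient}. Your two side remarks --- that a relative-interior qualification can genuinely fail (for acyclic demand the only detailed-balance flow may be the vertex $f=0$ of $A$), and that $U'_{i,j}(0)<\infty$ is essential rather than cosmetic (with $U(f)=\sqrt{f}$ the dual infimum is approached only as $\lambda\to\infty$) --- are both accurate and explain why the lemma's hypotheses and the paper's choice of constraint qualification are what they are.
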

\begin{proof} The lemma follows immediately from Proposition 3.4.2 of \cite{bertsekas1999nonlinear}, which states sufficient conditions for strong duality to hold. Those conditions are satisfied by our problem because: 
\begin{itemize}
    \item \eqref{eq:primal_problem} has a solution $f^*$ because the set of feasible solutions is non-empty and compact.
    \item The constraint set $A$ is a convex polytope and the detailed balance condition is a linear constraint.
    \item The objective function $U(f) + H(f)$ can be extended for all $f \in \mathbb{R}^P$ in a continuously differentiable manner by defining each $U_{i,j}(\cdot)$ outside the interval $[0, a_{i,j}]$ to be appropriate affine functions.
\end{itemize}
By Proposition 3.4.2(a) of \cite{bertsekas1999nonlinear}, \eqref{eq:dual_problem} has a solution $\lambda^*$. Since $\lambda^*$ is a minimizer of $D(\lambda)$, it must be that $0 \in \partial D(\lambda^*)$. By Lemma \ref{lem:dual_subgradient}, this implies that there exists $f^* \in F(\lambda^*)$ such that $Rf^* = 0$, \textit{i.e.}, there exists a feasible $f^*$ in $F(\lambda^*)$. By Proposition 3.4.2(b), this $f^*$ is a solution to \eqref{eq:primal_problem}.
\end{proof}

Our next result is to establish conditions under which the dual function is smooth. We invoke standard properties of the Fenchel conjugate of convex functions to prove this result (see Section 2.7 of \cite{shalev2012online} for a reference).

\begin{assumption}\label{assumption_eta_positive}
    $\eta_{i,j} \geq \eta > 0 \ \forall \ (i,j) \in \N$. \newline
\end{assumption}

\begin{lemma}\label{lem:smoothness}
    Under Assumption \ref{assumption_eta_positive}:
    \begin{itemize}
        \item $F(\lambda) = \argmax_{f \in A} L(f, \lambda)$ is a singleton for all $\lambda$.
        \item The dual function is smooth with parameter $\Vert R \Vert_{op}^2/\eta$, where $\Vert \cdot \Vert_{op}$ denotes the operator norm of a matrix.
        \item $F(\lambda)$ is a continuous function of $\lambda$.
    \end{itemize}
\end{lemma}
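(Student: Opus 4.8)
The three claims are tightly linked, so the plan is to establish the singleton property first and then read off both the smoothness of $D$ and the continuity of $F$ from the Fenchel-conjugate machinery hinted at in the text. For the first bullet, I would show that under Assumption~\ref{assumption_eta_positive} the map $f \mapsto L(f,\lambda)$ is strongly concave on $A$ for every fixed $\lambda$. Indeed, $U(f)$ is concave and the term $-\lambda^{T}Rf$ is linear (contributing no curvature), while the regularizer $H(f) = -\sum_{(i,j)}\eta_{i,j}\sum_k (f_{i,j,k})^2$ has Hessian $-2\,\mathrm{diag}(\eta_{i,j}) \preceq -2\eta I$. Hence $L(\cdot,\lambda)$ is strongly concave with modulus at least $2\eta$, and a strongly concave function attains its maximum at a unique point of the compact convex set $A$, so $F(\lambda)=\argmax_{f\in A}L(f,\lambda)$ is a singleton for all $\lambda$. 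Combined with Lemma~\ref{lem:dual_subgradient}, this already forces $\partial D(\lambda)=\{-Rf\}$ to be a singleton, i.e. $D$ is differentiable everywhere.

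To quantify the smoothness I would pass to the Fenchel conjugate. Define $\phi(f) \triangleq -\bigl(U(f)+H(f)\bigr)$ for $f \in A$ and $\phi(f)\triangleq +\infty$ otherwise. Since $A$ is compact and convex and $U,H$ are continuous and concave, $\phi$ is a proper, closed, convex function, and by the Hessian bound above it is $2\eta$-strongly convex on its effective domain $A$ (the strong-convexity inequality holds trivially once an argument leaves $A$, as the right-hand side becomes $+\infty$). Writing $\mu = R^{T}\lambda$, definition \eqref{eq:dual_function} becomes $D(\lambda)=\sup_{f}\bigl[\langle -R^{T}\lambda,\, f\rangle - \phi(f)\bigr]=\phi^{*}(-R^{T}\lambda)$, where $\phi^{*}$ is the convex conjugate. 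The standard strong-convexity/smoothness duality (Section~2.7 of \cite{shalev2012online}) then yields that $\phi^{*}$ is differentiable with $\tfrac{1}{2\eta}$-Lipschitz gradient. Since $D = \phi^{*}\circ(-R^{T}\,\cdot\,)$, the chain rule gives $\nabla D(\lambda) = -R\,\nabla\phi^{*}(-R^{T}\lambda)$, so that for any $\lambda_1,\lambda_2$,
\[
\|\nabla D(\lambda_1)-\nabla D(\lambda_2)\| \le \|R\|_{op}\cdot\tfrac{1}{2\eta}\cdot\|R^{T}(\lambda_1-\lambda_2)\| \le \tfrac{\|R\|_{op}^{2}}{2\eta}\,\|\lambda_1-\lambda_2\|,
\]
using $\|R^{T}\|_{op}=\|R\|_{op}$. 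This is the claimed smoothness bound proportional to $\|R\|_{op}^{2}/\eta$, the exact constant depending only on the strong-convexity normalization one adopts.

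Continuity of $F$ then comes for free from the same identification. By the conjugate-subgradient correspondence (equivalently, the envelope/Danskin argument behind Lemma~\ref{lem:dual_subgradient}), the gradient $\nabla\phi^{*}(y)$ equals the unique maximizer defining $\phi^{*}(y)$; taking $y=-R^{T}\lambda$ gives exactly $F(\lambda)=\nabla\phi^{*}(-R^{T}\lambda)$. Since $\phi^{*}$ is smooth by the previous step, $\nabla\phi^{*}$ is Lipschitz, and composing with the continuous linear map $\lambda\mapsto -R^{T}\lambda$ shows that $F$ is in fact $\tfrac{\|R\|_{op}}{2\eta}$-Lipschitz, hence continuous.

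The only step requiring genuine care is the second one: matching the conjugate convention so that $D(\lambda)=\phi^{*}(-R^{T}\lambda)$ holds, handling the $+\infty$ extension so that $\phi$ is simultaneously proper, closed, and strongly convex on $A$, invoking the strong-convexity/smoothness duality with the correct modulus, and tracking the $\|R\|_{op}$ factors through the linear composition to land on the stated constant. Once this is in place, the singleton claim and the continuity of $F$ are essentially corollaries.
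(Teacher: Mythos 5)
Your proof is correct and follows essentially the same route as the paper's: both identify the dual function as a Fenchel conjugate of $-(U(f)+H(f))$ (restricted to $A$) evaluated at $-R^T\lambda$, and both invoke the strong-convexity/smoothness duality from Section 2.7 of \cite{shalev2012online}, composed with the linear map $\lambda \mapsto -R^T\lambda$ to pick up the $\Vert R \Vert_{op}^2$ factor. Two points of comparison are worth recording. First, constants: the paper works with the (conservative) modulus $\eta$ of strong convexity for $-(U+H)$ and so lands exactly on the stated parameter $\Vert R\Vert_{op}^2/\eta$, whereas your tight modulus $2\eta$ yields the stronger bound $\Vert R\Vert_{op}^2/(2\eta)$, which of course implies the lemma as stated. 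Second, and more substantively, the third bullet: the paper deduces continuity of $F(\lambda)$ from continuity of the gradient $\nabla D(\lambda) = -RF(\lambda)$, a step that is incomplete as written, since the routing matrix $R$ can have a nontrivial kernel (there may be more paths than channels), and continuity of $\lambda \mapsto RF(\lambda)$ does not by itself give continuity of $\lambda \mapsto F(\lambda)$. Your argument avoids this entirely: identifying $F(\lambda) = \nabla\phi^*(-R^T\lambda)$ via the conjugate-gradient correspondence and using the $\tfrac{1}{2\eta}$-Lipschitzness of $\nabla\phi^*$ gives continuity of $F$ directly, and in fact the stronger conclusion that $F$ is Lipschitz with constant $\Vert R\Vert_{op}/(2\eta)$. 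In this respect your write-up is tighter than the paper's own proof.
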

\begin{proof}
    Under Assumption \ref{assumption_eta_positive}, the Lagrangian, $L(f, \lambda)$, is  $\eta$-strongly concave in $f$. This is because $L(f,\lambda)$ is the sum of an $\eta-$strongly concave function $H(f)$, a concave function $U(f)$, and a linear function $\lambda^TRf$. The strong concavity of $L$ implies the uniqueness of $F(\lambda)$.
    
    To show the smoothness of $D(\lambda)$ with respect to $\lambda$, define $\Tilde{D}(\mu) \triangleq \max_{f \in A} \Tilde{L}(f, -\mu)$, where $\Tilde{L}(f, \mu)$ is defined in \eqref{eq:lagrangian_f_mu}. By this construction, $\Tilde{D}(\mu)$ is the Fenchel conjugate of $-(H(f) + U(f))$. Since $-(H(f) + U(f))$ is an $\eta$-strongly convex function, $\Tilde{D}(\mu)$ is a $1/\eta$-smooth function of $\mu$ \cite{shalev2012online}. Since $D(\lambda) = \Tilde{D}(-R^T\lambda)$, it follows that $D(\lambda)$ is $\Vert R \Vert_{op}^2/\eta$-smooth function of $\lambda$.

    By Lemma \ref{lem:dual_subgradient}, $\nabla D(\lambda) = RF(\lambda)$. By definition of smoothness, $\nabla D(\lambda)$ is a continuous function of $\lambda$, which in turn, implies the continuity of $F(\lambda)$ with respect to $\lambda$.
\end{proof} 

We now present our main result, which states that with suitably small stepsizes, the DEBT control protocol converges to a solution of \eqref{eq:primal_problem}.

\begin{assumption}\label{assumption_small_stepsize}
    The stepsize of \eqref{eq:algorithm} satisfies $\gamma < \eta/\Vert R \Vert^2_{op}$.\newline
\end{assumption}

\begin{proposition}\label{prop:convergence}
    Under Assumptions \ref{assumption_eta_positive} and \ref{assumption_small_stepsize}:
    \begin{itemize}
        \item \(D(\lambda[t]) - D(\lambda^*) \leq \frac{\Vert \lambda^* \Vert}{2\gamma t}  \ \forall \ t \geq 1\).
        \item $\lambda[t] \rightarrow \lambda^{**}$ for some $\lambda^{**} \in \argmin_{\lambda \in \mathbb{R}^E} D(\lambda)$  as $t \rightarrow \infty$.
        \item {$f[t] \rightarrow f^*$ as $t \rightarrow \infty$}, where $f^*$ is the unique solution to \eqref{eq:primal_problem}.
    \end{itemize}
\end{proposition}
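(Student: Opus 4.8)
The plan is to recognize algorithm \eqref{eq:algorithm} as ordinary gradient descent on the dual function $D$ and then invoke the standard convergence theory for smooth convex minimization. Under Assumption \ref{assumption_eta_positive}, Lemma \ref{lem:smoothness} guarantees that $F(\lambda)$ is a singleton, so by Lemma \ref{lem:dual_subgradient} the dual function is differentiable with $\nabla D(\lambda[t]) = -Rf[t]$. Hence the update $\lambda[t+1] = \lambda[t] + \gamma Rf[t]$ is exactly $\lambda[t+1] = \lambda[t] - \gamma \nabla D(\lambda[t])$. Lemma \ref{lem:smoothness} also gives that $D$ is $L$-smooth with $L = \Vert R \Vert_{op}^2/\eta$, and Assumption \ref{assumption_small_stepsize} makes the stepsize satisfy $\gamma < 1/L$. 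This is precisely the regime in which constant-stepsize gradient descent on a smooth convex function is well-behaved.

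For the first bullet (the rate), I would start from the descent lemma for $L$-smooth functions, which after substituting the gradient step and using $\gamma < 1/L$ yields $D(\lambda[t+1]) \leq D(\lambda[t]) - \tfrac{\gamma}{2}\Vert \nabla D(\lambda[t]) \Vert^2$; in particular $D(\lambda[t])$ is non-increasing. Next I would expand $\Vert \lambda[t+1] - \lambda^* \Vert^2$ and combine convexity of $D$ (the inequality $\nabla D(\lambda[t])^T(\lambda[t] - \lambda^*) \geq D(\lambda[t]) - D^*$) with the descent inequality to obtain the one-step relation $2\gamma(D(\lambda[t+1]) - D^*) \leq \Vert \lambda[t] - \lambda^* \Vert^2 - \Vert \lambda[t+1] - \lambda^* \Vert^2$. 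Telescoping from the initialization $\lambda[0] = 0$, and using monotonicity of $D(\lambda[t])$ to replace the running average by the last iterate, gives $D(\lambda[t]) - D^* \leq \Vert \lambda^* \Vert^2/(2\gamma t)$, the claimed $O(1/t)$ bound, where $\lambda^*$ is the dual optimizer supplied by Lemma \ref{lem:strong_duality}.

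The main obstacle is the second bullet: the $O(1/t)$ rate alone does not force the iterates to converge, since $D$ need not have a unique minimizer. Here I would exploit the fact that the one-step inequality above holds for \emph{every} dual optimizer $\lambda^*$, which shows that $\{\lambda[t]\}$ is Fej\'er monotone with respect to the closed, convex, nonempty set $\argmin_{\lambda} D(\lambda)$: the distance $\Vert \lambda[t] - \lambda^* \Vert$ is non-increasing for each such $\lambda^*$. Consequently $\{\lambda[t]\}$ is bounded and admits a convergent subsequence $\lambda[t_k] \to \lambda^{**}$; by continuity of $D$ together with the first bullet, $D(\lambda^{**}) = D^*$, so $\lambda^{**}$ is itself a minimizer. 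Applying Fej\'er monotonicity with the specific choice $\lambda^* = \lambda^{**}$ shows that $\Vert \lambda[t] - \lambda^{**} \Vert$ is non-increasing yet has a subsequence tending to zero, hence the entire sequence converges to $\lambda^{**}$.

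Finally, the third bullet follows by continuity and strong duality. Since $f[t] = F(\lambda[t])$ and $F$ is continuous by Lemma \ref{lem:smoothness}, we get $f[t] \to F(\lambda^{**})$. Because $\lambda^{**}$ minimizes $D$, we have $0 \in \partial D(\lambda^{**})$, and since $F(\lambda^{**})$ is the unique maximizer of the Lagrangian, Lemma \ref{lem:dual_subgradient} forces $RF(\lambda^{**}) = 0$, i.e.\ $F(\lambda^{**})$ is primal-feasible. Lemma \ref{lem:strong_duality} then makes it primal-optimal, and strong concavity of the regularized objective under Assumption \ref{assumption_eta_positive} guarantees the primal optimizer is unique, so $F(\lambda^{**}) = f^*$ and $f[t] \to f^*$.
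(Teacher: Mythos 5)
Your proof is correct and follows essentially the same route as the paper: the paper likewise treats \eqref{eq:algorithm} as constant-stepsize gradient descent on the smooth dual, cites the standard $O(1/t)$ analysis (Section 3.2 of the Bubeck reference) whose ``key step'' is exactly your Fej\'er-monotonicity observation, and then deduces convergence of the flows from the continuity of $F(\lambda)$ and strong duality. Incidentally, your bound $\Vert \lambda^* \Vert^2/(2\gamma t)$ is the correct form; the paper's statement writes $\Vert \lambda^* \Vert$ without the square, which appears to be a typo.
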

\begin{proof}
    The first point is a standard result in convex optimization, namely gradient descent with a constant stepsize, when applied to a smooth convex function, leads to the function value converging to its infimum at a rate of $O(1/t)$ (see Section 3.2 of \cite{bubeck2015convex}). In the proof of this result, a key step is to establish that for any $\lambda^{**} \in \argmin_{\lambda \in \mathbb{R}^E}D(\lambda)$, $\Vert \lambda[t] - \lambda^{**}\Vert$ is a nonincreasing sequence. This, coupled with the existence of a finite minimizer (Lemma \ref{lem:strong_duality}), implies the second point. The third point follows from the previous point on the convergence of $\lambda[t]$, the continuity of $F(\lambda)$ with respect to $\lambda$ (Lemma \ref{lem:smoothness}), and the fact that $f^* = F(\lambda^*)$ is primal optimal (Lemma \ref{lem:strong_duality}).
\end{proof}

\section{Simulation Results}\label{sec:pcn_simulations}
In this section, we show simulation results of the DEBT control protocol $\eqref{eq:algorithm}$ on four different instances of a PCN, illustrating how it performs dynamic routing and flow-control to ensure optimal flows in the long run. The code used to perform these experiments can be found at \url{https://github.com/sankagiri/payment-channel-networks}.

\subsection{Dynamic Routing Example}\label{sec:routing_example}
Consider a payment channel network with three nodes, $A, B$, and $C$, and three channels, $A-B$, $B-C$, and $C-A$. Each of the channels have a capacity of $100$ and they are all initially perfectly balanced. The demand is a circulation with $a_{A,B} = a_{B,C} = a_{C,A} = 10$, which can be served entirely without any on-chain rebalancing. Consider, for example, the transactions from $A$ to $B$. There are two possible paths to route such transactions: a short path directly along the channel $A-B$, and a longer path via $C$. Similarly, the other two demands also can be served over two possible paths. Observe that routing transactions along the shortest path at all times is not sustainable; it skews the channel balances. The transactions must alternate between the short path and the long path.

\begin{figure*}[!t]
\centering
\subfloat[]{\includegraphics[width=3.3in]{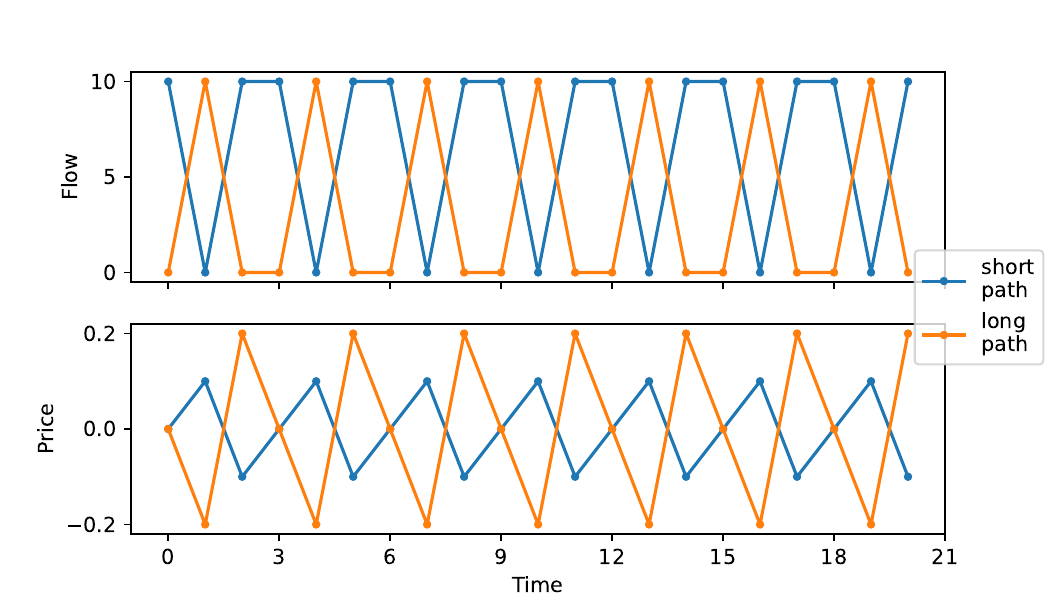}%
\label{fig:dynamic_routing_1}}
\hfil
\subfloat[]{\includegraphics[width=3.3in]{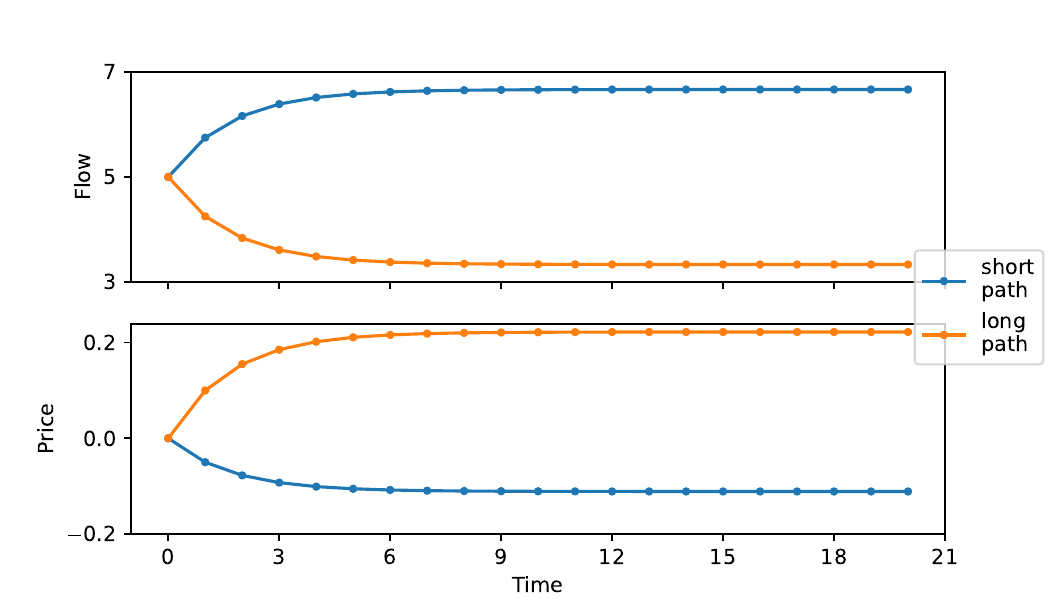}%
\label{fig:dynamic_routing_2}}
\caption{Dynamic routing ensures perennial operation in a PCN with circulation demand: an illustration of the effect of the DEBT control protocol in the PCN described in Section \ref{sec:routing_example}, (a) without the regularizer and (b) with the regularizer.}
\label{fig:dynamic_routing}
\end{figure*}

\begin{figure*}[!t]
\centering
\subfloat[]{\includegraphics[width=3.3in]{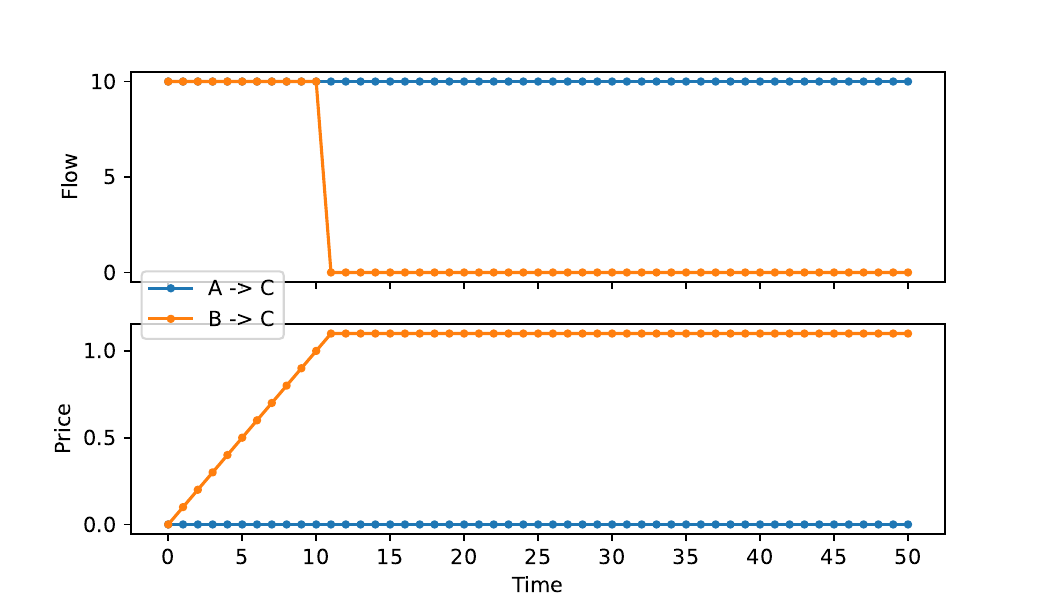}%
\label{fig:deadlock_1}}
\hfil
\subfloat[]{\includegraphics[width=3.3in]{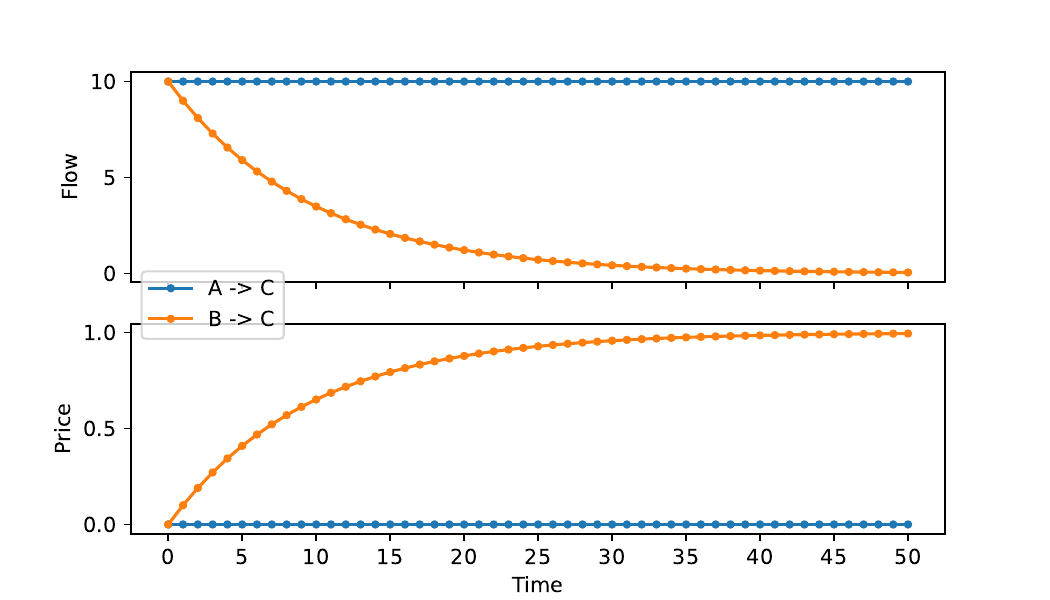}%
\label{fig:deadlock_2}}
\caption{Deadlock prevention via flow-control: an illustration of the effect of the DEBT control protocol in the PCN described in Section \ref{sec:flow_control_example}, (a) without the regularizer and (b) with the regularizer.}
\label{fig:deadlock}
\end{figure*}

Figure \ref{fig:dynamic_routing} illustrates the flows and the prices in this PCN over time under the DEBT Control protocol. 
(We only plot the flows from $A$ to $B$; the other flows are identical.) Figure \ref{fig:dynamic_routing_1} shows the case where the regularizer coefficient $\eta$ is zero, while Figure \ref{fig:dynamic_routing_2} shows the case where $\eta = 0.1$.  The stepsize $\gamma$ is set as $0.01$ in both Figures \ref{fig:dynamic_routing_1} and \ref{fig:dynamic_routing_2}. In Figure \ref{fig:dynamic_routing_1}, nodes route the entire flow along the cheaper path at all times. Ties are broken in favor of the shorter path. A flow in one direction raises prices in that direction, incentivizing nodes to take the other path in the next step. We see that the protocol follows a periodic pattern, choosing the shorter path twice in succession followed by choosing the longer path once. In Figure \ref{fig:dynamic_routing_2}, nodes split each transaction along both the paths, giving more weightage to the path with the lower price. Thus, adding the regularizer smoothens the flows as a function of the path prices. This smooth variation allows both the flows and the prices to converge. Finally, note that in both cases in Figure \ref{fig:dynamic_routing}, the long-term average of the flows satisfies the detailed balance condition.

\begin{figure*}[!t]
\centering
\subfloat[]{\includegraphics[width=3.3in]{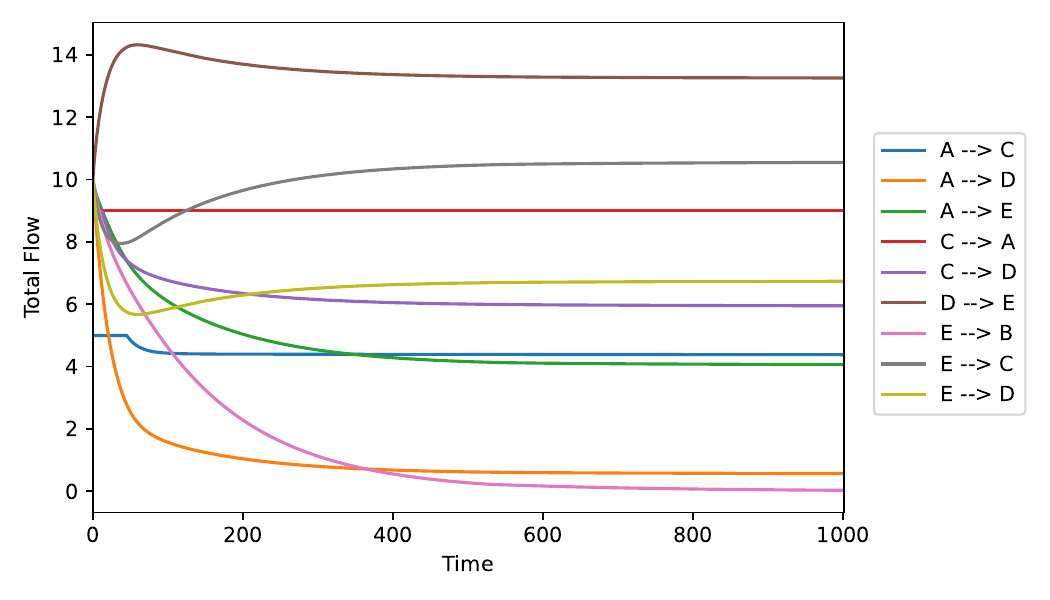}%
\label{fig:five_nodes_flow_small_steps}}
\hfil
\subfloat[]{\includegraphics[width=3.3in]{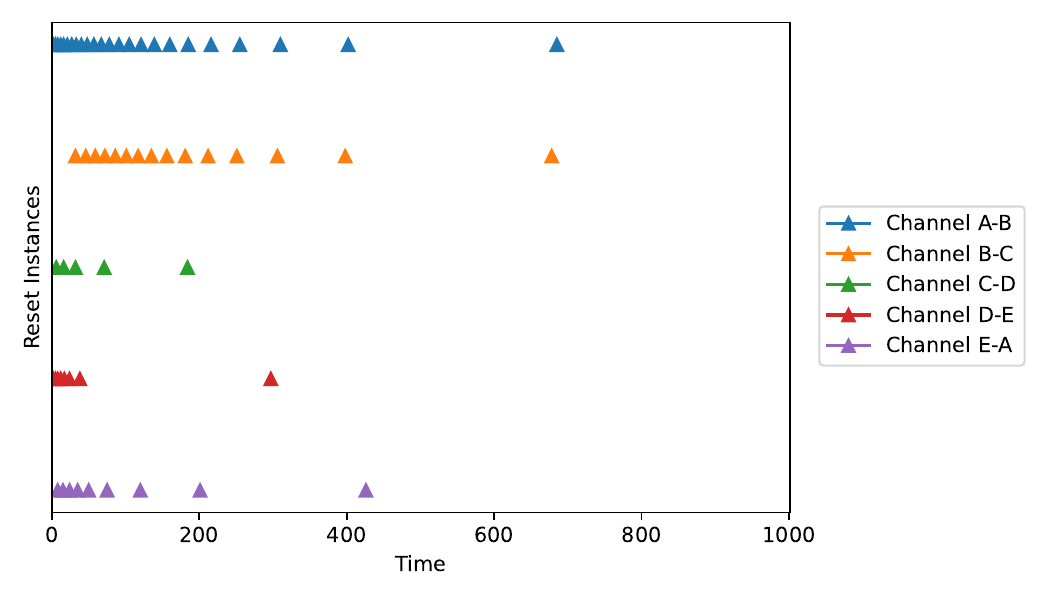}%
\label{fig:five_nodes_resets_small_steps}}
\caption{Behavior of the five node PCN with a stepsize of $\gamma = 0.01$; (a) shows flows as a function of time; (b) shows the times at which channel  resets occur.}
\label{fig:five_nodes_small}
\end{figure*}

\begin{figure*}[!t]
\centering
\subfloat[]{\includegraphics[width=3.3in]{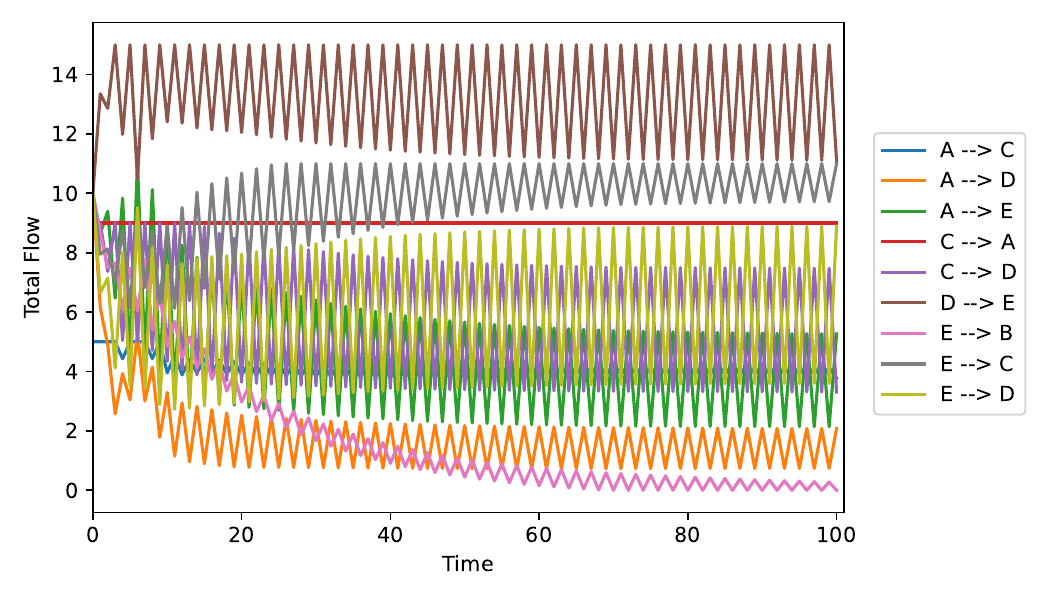}
\label{fig:five_nodes_flow_large_steps}}
\hfil
\subfloat[]{\includegraphics[width=3.3in]{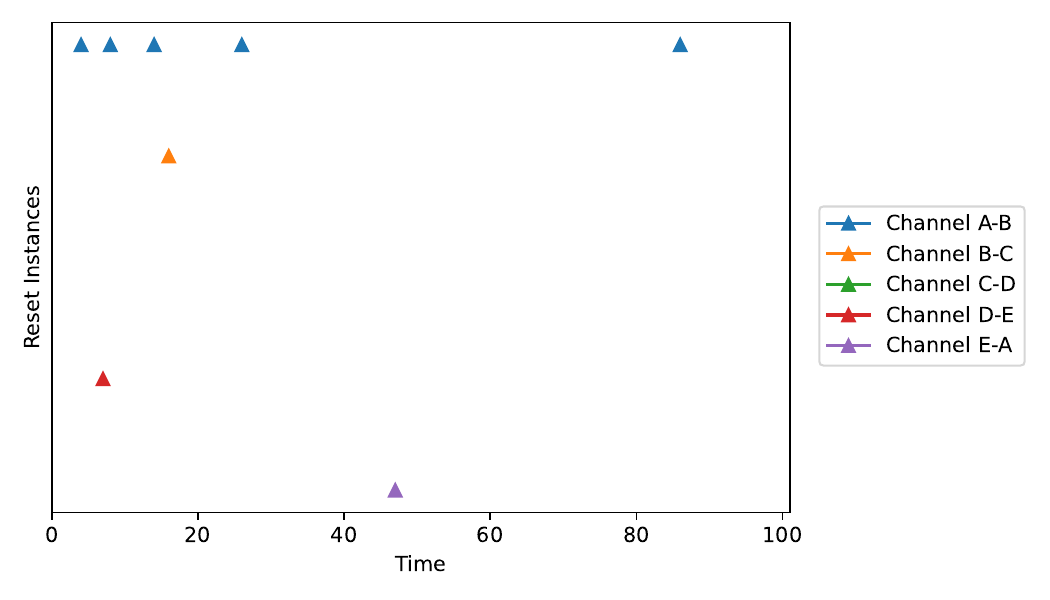}%
\label{fig:five_nodes_resets_large_steps}}
\caption{Behavior of the five node PCN with a stepsize of $\gamma = 0.1$; (a) shows flows as a function of time; (b) shows the times at which channel  resets occur.}
\label{fig:five_nodes_large}
\end{figure*}

\subsection{Flow-Control Example}\label{sec:flow_control_example}
We now demonstrate the ability of algorithm $\eqref{eq:algorithm}$ to perform flow-control by considering its behavior on a simple PCN prone to deadlocks. This example is taken from \cite{sivaraman2021effect}. The PCN has three nodes, $A, B,$ and $C$, and two channels: $A-B$ and $B-C$, with a capacity of $100$ each (initially balanced). The demands are: $a_{A,C} = a_{C, A} = a_{B,A} = a_{B,C} = 10.$ In this example, the demand from $A$ to $C$ and back form a circulation and can be sustained forever, whereas the demand from $B$ to $A$ and $C$ are DAG demands and therefore cannot be sustained. Moreover, if the network tries to serve the entire demand, eventually the balances in the two channels will get skewed towards $A$ and $C$, with no balance left at $B$. In such a state, all four flows are rendered infeasible (see \eqref{eq:feasibility} for the definition of feasibility), thus creating a deadlock.

Figure \ref{fig:deadlock} shows how algorithm \eqref{eq:algorithm} avoids this deadlock by curbing the flow from $B$ to $A$ and $C$ via channel prices. Due to symmetry, we only show the flows and path prices from $A$ to $C$ and from $B$ to $C$. The corresponding quantities from $C$ to $A$ and from $B$ to $A$ are identical to these. The stepsize $\gamma$ is chosen to be $0.01$. Once again, we show two cases, without the quadratic regularizer ($\eta = 0$) in Figure \ref{fig:deadlock_1} and with the regularizer ($\eta = 0.1$) in Figure \ref{fig:deadlock_2}. Further, we choose the utility function to be $U(\totalflow) = \totalflow$ for all the flows. Adopting this utility function means that for any path, if the price is strictly above $1$, the corresponding flow will be zero. In Figure \ref{fig:deadlock_1}, we see that with every time step, the path prices from $B$ to $A$ and $B$ to $C$ keeps increasing linearly and at some point, they exceed one. At this point, the corresponding flows turn off, which keeps the prices stable. The channel price in the $A\rightarrow B$ direction is negative, and for $B-C$ in the $B\rightarrow C$ direction is positive, which implies the prices from $A \rightarrow C$ and $C \rightarrow A$ add up to zero at all times. Thus, these flows continue unabated, as desired. The effect of the regularizer can be seen by comparing Figures \ref{fig:deadlock_1} and \ref{fig:deadlock_2}. In the absence of the regularizer, the flows exhibit a switching behavior as a function of the price. In contrast, with a regularizer, the flows vary smoothly as a function of the path price. However, the asymptotic behavior of the protocol is the same in both cases.

\subsection{Five-Node Network Example}\label{sec:five_node}
{Consider a payment channel network with five nodes ($A, B, C, D, E$) and five channels ($A$-$B$, $B$-$C$, $C$-$D$, $D$-$E$, $E$-$A$), arranged as a ring. Each of the channels has a capacity of a hundred. Because of the topology, each pair of nodes has two possible paths to transact along. The demand matrix is as follows (source along rows, destination along columns):
\begin{align*}
    \begin{array}{cccccc}
        S/D & A & B & C & D & E \\
        A & 0 & 0 & 5 & 10 & 11 \\
        B & 0 & 0 & 0 & 0 & 0 \\
        C & 9 & 0 & 0 & 9 & 0 \\
        D & 0 & 0 & 0 & 0 & 15 \\
        E & 0 & 10 & 11 & 13 & 0 \\ 
    \end{array}
\end{align*}
Note that the demands are much smaller than the channel capacities.
Each of the transacting node pairs has the same utility function $U(f) = 5f$. They also share the same regularizer coefficient $\eta = 1$.}

\begin{figure*}[!t]
\centering
\subfloat[]{\includegraphics[width=3.3in]{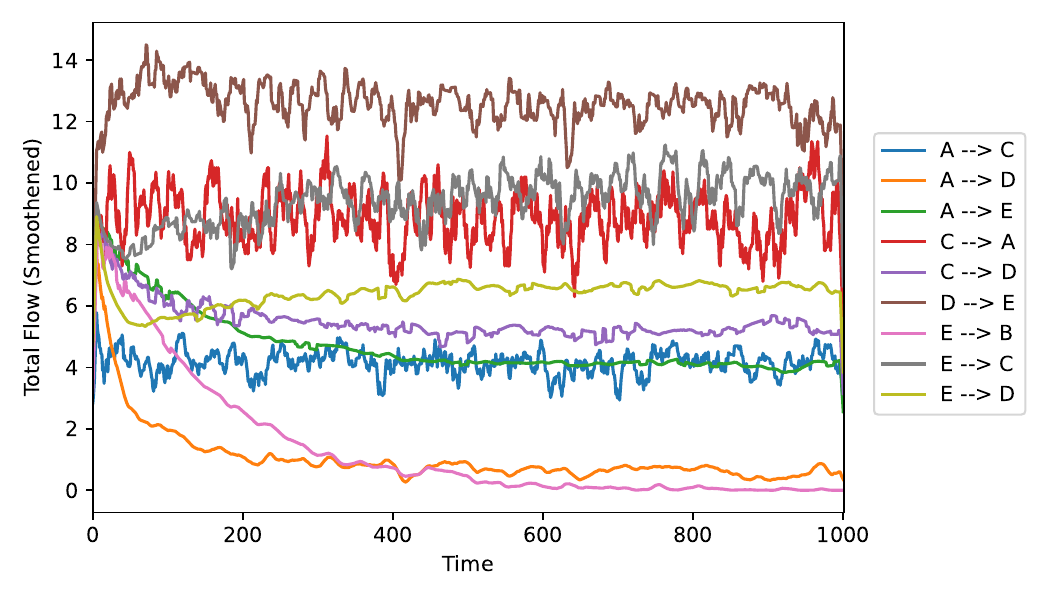}
\label{fig:five_nodes_flow_random_demand}}
\hfil
\subfloat[]{\includegraphics[width=3.3in]{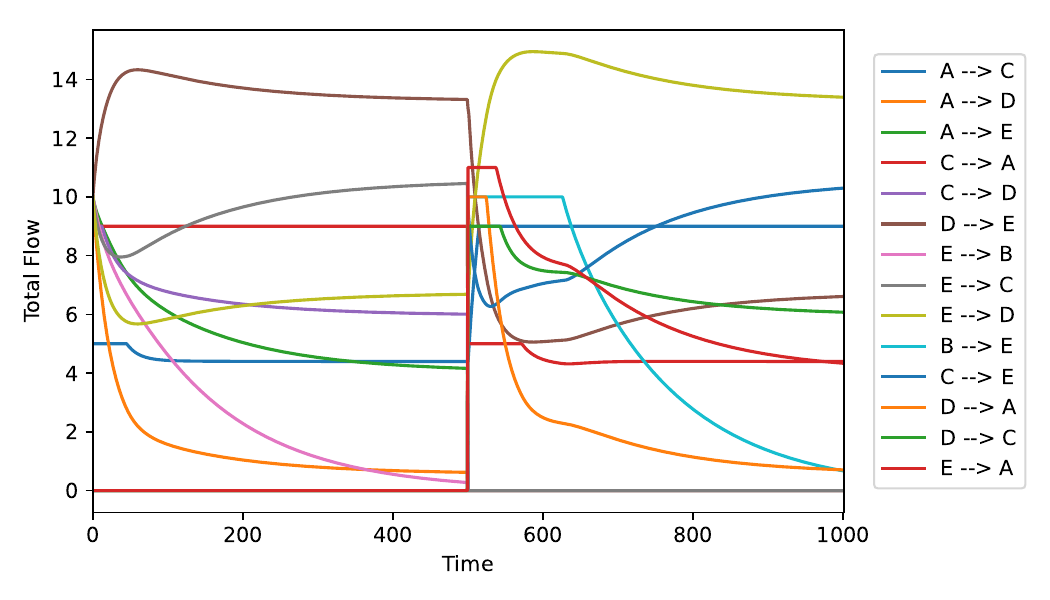}%
\label{fig:five_nodes_flows_sudden_demand}}
\caption{Flows in the five node PCN with time-varying demand; (a) when the demand at each step is independent and Poisson distributed; (b) when the demand is piece-wise constant, with a sudden change in the middle of the time horizon.}
\label{fig:five_nodes_variable_demand}
\end{figure*}

We simulate the protocol under two different settings: once with a small stepsize of $\gamma = 0.01$ (see Figure \ref{fig:five_nodes_small}) and once with a large stepsize of $\gamma = 0.1$ (see Figure \ref{fig:five_nodes_large}). In each setting, we plot the total flow between each of the transacting node pairs (Figures \ref{fig:five_nodes_flow_small_steps} and \ref{fig:five_nodes_flow_large_steps}) as well as the instances of channel resets (Figures \ref{fig:five_nodes_resets_small_steps} and \ref{fig:five_nodes_resets_large_steps}). The figures illustrate that when the step size is small, the flows converge to a stationary point. This behavior is consistent with Proposition \ref{prop:convergence}. However, the transient period is long ($\approx$ five hundred steps), which results in a fairly large number of channel resets. In contrast, when the step size is large, the flows do not converge to a single point; rather, they oscillate about a particular value. (The channel prices, not shown, also oscillate around a fixed value.) However, the protocol reaches a `steady state' much faster; note the difference in scales of the time axis in the two plots. Consequently, the number of channel resets are also smaller. In summary, varying the stepsize allows us to trade-off the smoothness of flows with the number of channel resets.

In addition, we also demonstrate the performance of the DEBT control protocol on the same network, but with variable demands in Figure \ref{fig:five_nodes_variable_demand}. First, we consider a setting with stochastic demand. The demand between any two nodes is Poisson distributed with the same mean as in the earlier simulation, and is independent at each time slot. The flows shown in Figure \ref{fig:five_nodes_flow_random_demand} are smoothened with a ten-step moving average filter. In comparison to \ref{fig:five_nodes_flow_small_steps}, the flows have similar, albeit slightly smaller (average) values. Second, we simulate a setting with a piecewise constant demand. The initial demand is the same as that of Figure \ref{fig:five_nodes_small}. Midway through the execution, the demand between every pair of nodes reverses (the source becomes the destination and vice-versa). The resulting flows are shown in Figure \ref{fig:five_nodes_flows_sudden_demand}. The plots illustrate that the protocol can quickly adapt to changes in the demand.

\subsection{Ten Node Network Example}

    The simulations in this section are based on the code provided by \cite{varma2021throughput}, allowing for a comparison of performance with the algorithm used there. We make two changes to the code. First, we change the transaction demand so it is not a pure circulation as used by \cite{varma2021throughput}. Second, we incorporate our pricing algorithm. We highlight these points below.

    The simulations are for random networks with $n=10$ nodes.  There is a bidirectional edge between any pair of nodes with probability $p=0.3.$ Between each pair of nodes, the $K$ shortest paths are considered. There are 1500 transactions between every pair of nodes. In the simulation, transaction requests are considered one at a time. The mean size of transactions from $i$ to $j$ is given by $P_{i,j}$, and the distribution of the sizes of transactions from $i$ to $j$ has the Poisson distribution with mean $P_{i,j}.$ In \cite{varma2021throughput}, the matrix $P$ was generated as a sum of $3n$ independent uniformly distributed random permutation matrices. This resulted in the demand being a circulation and the mean transaction size being three. In this work, we generate $P$ such that 
    $E[P_{i,j}] = 4$ if $i < j$ and $E[P_{i,j}] = 2$ if $i > j$. Thus, the transaction size averaged over all pairs is still 3. Observe that 2/3 of the load is a circulation and 1/3 is acyclic.  

    In \cite{varma2021throughput}, $M$ is the capacity of each link for each direction, so $2M$ is the same as the link capacity $c$ in this paper. The algorithm of  \cite{varma2021throughput} drops all demand which would make a balance get larger than $M$, so it keeps the backlog in each direction less than or equal to M. Thus, there is no queuing in the implementation; transactions are either executed immediately or dropped. In \cite{varma2021throughput} for a link $(u,v),$ if the queue sizes are $q_{(u,v)}$ and $q_{(v,u)}$ each in $[0,M]$ then an amount of transaction $s = \min\{ q_{(u,v)}, q_{(v,u)} \}$ could be locally cleared over the link.  Reducing both $q_{(u,v)}$ and $q_{(v,u)}$ by $s$ makes one of them zero.   The state $(q_{(u,v)}, q_{(v,u)} )$ in \cite{varma2021throughput} is equivalent to the channel having no queueing and total escrow amount $2M$ and the balance of escrow fund at $u$ being $x_{u,v} = M-q_{(u,v)} + q_{(v,u)},$ where $x$ is the state variable in this paper.

    Our model involves congestion control and focuses on utility maximization with utility functions $U_{i,j}$ such that $U_{i,j}(q)$ is the utility of routing quantity $q$ from $i$ to $j$ in one time slot. To focus on the amount of successful transactions and to take into account the demand $a_{i,j}$ for a transaction from $i$ to $j$ we assume that the utility functions are the following: $U_{i,j}(q) = \min\{a_{i,j},q\}\tau$ for all $i,j$, where $\tau$ is a positive constant.  If the total path price is $p$ per unit flow then the payoff for flow $i,j$ is maximized by submitting the full transaction if $p \leq \tau$ and withholding it otherwise. Thus, $\tau$ becomes a path price threshold.    

    In order to make a fair comparison with the algorithm of \cite{varma2021throughput}, we implement a version of the DEBT control protocol without rebalancing. For any transaction request, the first step is to select the path with the smallest path price (sum of link prices along the path).  If the path price exceeds the path price threshold, the transaction is not routed and prices are not updated. If the path price is less than the price threshold, then we check to see if there is sufficient capacity to route the transaction. If yes, the transaction is routed. If not, we update the prices as if the transaction was routed (i.e., as per \eqref{eq:algorithm}), but we do not update the balances; we do not count the transaction as a successful one. The theoretical model of our paper would force the transaction to be routed by forcing a complete rebalancing of the channel yielding half the escrow on each side. The version we implement is equivalent to a small rebalancing of just enough to carry the transaction. 

    In summary, the essential difference between the simulated versions of our algorithm and the algorithm of \cite{varma2021throughput} lies in the way transactions are handled on paths with insufficient balance but with price lower than the threshold. While both algorithms drop such transactions, our algorithm updates prices whereas there is no change of state for the algorithm of \cite{varma2021throughput}.

\begin{figure}[htb]
  \centering
  \begin{tabular}{c}
    \includegraphics[width=.8\linewidth]{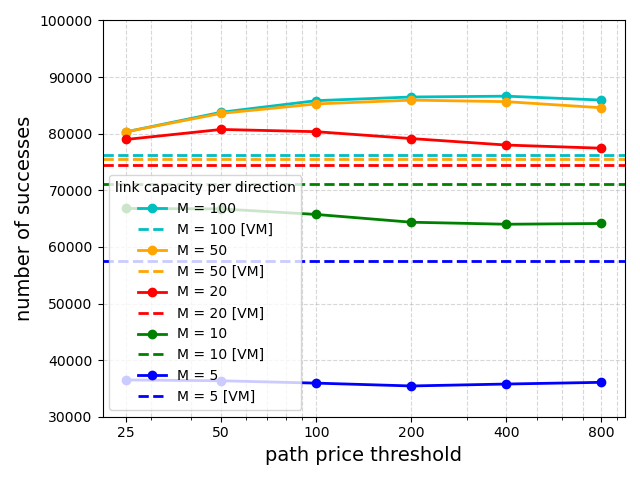}  \\
    \small (a) $K=1$ \\
    \includegraphics[width=.8\linewidth]{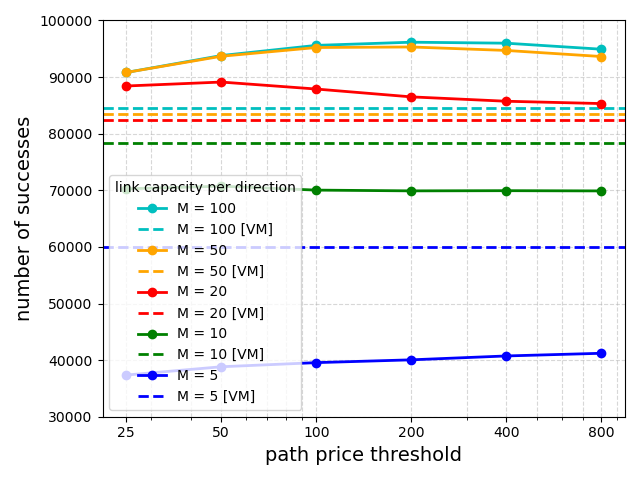}  \\
    \small (b) $K=2$ \\
  \end{tabular}
  \caption{Number of successful transactions vs path price threshold}
  \label{fig:num_success}
\end{figure}

\begin{figure}[htb]
  \centering
  \begin{tabular}{c}
    \includegraphics[width=.8\linewidth]{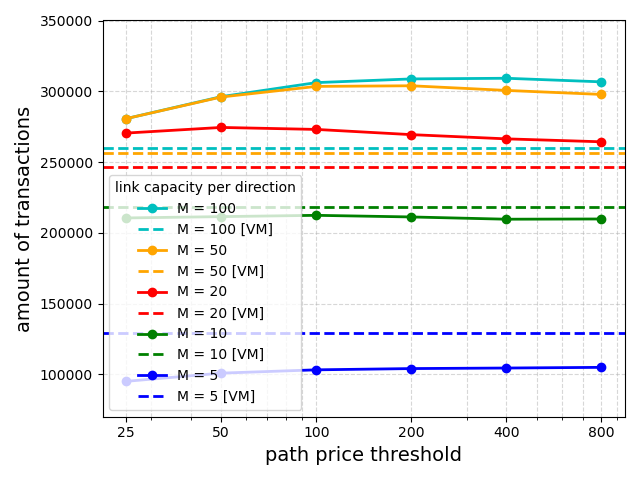}  \\
    \small (a) $K=1$ \\
    \includegraphics[width=.8\linewidth]{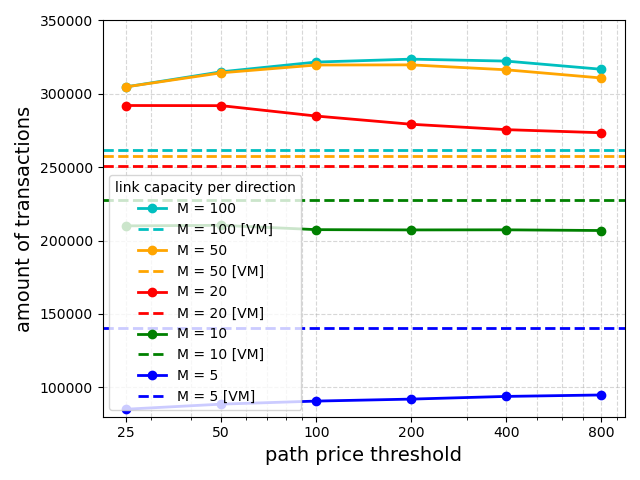}  \\
    \small (b) $K=2$ \\
  \end{tabular}
  \caption{Sum of amounts of successful transactions vs path price threshold}
  \label{fig:amount}
\end{figure}

Numerical results are shown in Figs. \ref{fig:num_success} and \ref{fig:amount}. In both Figures, the solid lines show performance of our algorithm vs. the path price threshold and the dashed lines show the performance of the algorithm of \cite{varma2021throughput}, for $M$ ranging from 5 to 100 and
$K \in \{1, 2\}.$   Fig. \ref{fig:num_success} shows the number of transactions served and Fig. \ref{fig:amount} shows the total amount of the successful transactions.  For the smallest values of $M$, $M=5$ and $M=10$ our pricing algorithm yields a significantly inferior performance.  Given that the mean transaction size is 3, these values of $M$ are considerably smaller than what is expected in applications.  For the larger values of $M$,  $M=20, 50,$ or $M=100$ our algorithm achieves significantly greater number of successes and larger amounts of transactions.  The performance of our algorithm is fairly insensitive to the value of the path price threshold.

The performance for $K=2$ is significantly better than for $K=1$ but after that we observed diminishing or negative gains with increasing $K$ further. We produced figures for $K=3$ but they are slightly lower and very close to those for $K=2$, so they are omitted.

\section{Discussion of Assumptions}\label{sec:discussion}
This paper makes several assumptions for the sake of clarity and simplicity. This section discusses the rationale behind these assumptions and the extent to which these assumptions hold in practice.

\subsection{Assumptions on the Demand}

There are two simplifying assumptions we make about the demand. First, we assume the demand at any time is relatively small compared to the channel capacities. Second, we take the demand to be constant over time. We elaborate upon both these points below.

\paragraph{Small demands} The assumption that demands are small relative to channel capacities is made precise in \eqref{eq:large_capacity_assumption}. This assumption simplifies two major aspects of our protocol. First, it largely removes congestion from consideration. In \eqref{eq:primal_problem}, there is no constraint ensuring that total flow in both directions stays below capacity--this is always met. Consequently, there is no Lagrange multiplier for congestion and no congestion pricing; only imbalance penalties apply. In contrast, protocols in \cite{sivaraman2020high, varma2021throughput, wang2024fence} include congestion fees due to explicit congestion constraints. Second, the bound \eqref{eq:large_capacity_assumption} ensures that as long as channels remain balanced, the network can always meet demand, no matter how the demand is routed. Since channels can rebalance when necessary, they never drop transactions. This allows prices and flows to adjust as per the equations in \eqref{eq:algorithm}, which makes it easier to prove the protocol's convergence guarantees. This also preserves the key property that a channel's price remains proportional to net money flow through it.

In practice, payment channel networks are used most often for micro-payments, for which on-chain transactions are prohibitively expensive; large transactions typically take place directly on the blockchain. For example, according to \cite{river2023lightning}, the average channel capacity is roughly $0.1$ BTC ($5,000$ BTC distributed over $50,000$ channels), while the average transaction amount is less than $0.0004$ BTC ($44.7k$ satoshis). Thus, the small demand assumption is not too unrealistic. Additionally, the occasional large transaction can be treated as a sequence of smaller transactions by breaking it into packets and executing each packet serially (as done by \cite{sivaraman2020high}).
Lastly, a good path discovery process that favors large capacity channels over small capacity ones can help ensure that the bound in \eqref{eq:large_capacity_assumption} holds.

\paragraph{Constant demands} 
In this work, we assume that any transacting pair of nodes have a steady transaction demand between them (see Section \ref{sec:transaction_requests}). Making this assumption is necessary to obtain the kind of guarantees that we have presented in this paper. Unless the demand is steady, it is unreasonable to expect that the flows converge to a steady value. Weaker assumptions on the demand lead to weaker guarantees. For example, with the more general setting of stochastic, but i.i.d. demand between any two nodes, \cite{varma2021throughput} shows that the channel queue lengths are bounded in expectation. If the demand can be arbitrary, then it is very hard to get any meaningful performance guarantees; \cite{wang2024fence} shows that even for a single bidirectional channel, the competitive ratio is infinite. Indeed, because a PCN is a decentralized system and decisions must be made based on local information alone, it is difficult for the network to find the optimal detailed balance flow at every time step with a time-varying demand.  With a steady demand, the network can discover the optimal flows in a reasonably short time, as our work shows.

We view the constant demand assumption as an approximation for a more general demand process that could be piece-wise constant, stochastic, or both (see simulations in Figure \ref{fig:five_nodes_variable_demand}).
We believe it should be possible to merge ideas from our work and \cite{varma2021throughput} to provide guarantees in a setting with random demands with arbitrary means. We leave this for future work. In addition, our work suggests that a reasonable method of handling stochastic demands is to queue the transaction requests \textit{at the source node} itself. This queuing action should be viewed in conjunction with flow-control. Indeed, a temporarily high unidirectional demand would raise prices for the sender, incentivizing the sender to stop sending the transactions. If the sender queues the transactions, they can send them later when prices drop. This form of queuing does not require any overhaul of the basic PCN infrastructure and is therefore simpler to implement than per-channel queues as suggested by \cite{sivaraman2020high} and \cite{varma2021throughput}.

\paragraph{Bounded derivative of utility functions}
We comment briefly on our assumption that the utility functions satisfy $U'_{i,j}(0)<\infty$ and the choice of the regularizer $H(f)$ being quadratic.  Other concave functions could be used for $H$ but it is important that $H$ has bounded gradient over the set of possible flow vectors.  These properties on the $U_{i,j}$'s and $H$ are used for the third bulleted condition in the proof of Lemma \ref{lem:strong_duality} for the application of Proposition 3.4.2 of \cite{bertsekas1999nonlinear}.   In particular, the proof does not go  through for
logarithmic utility functions used in \cite{sivaraman2020high} or logarithmic/entropic regularizers.   The price convergence in Proposition \ref{prop:convergence} doesn't hold for such utility functions or regularizers because no matter how large the price on a path is, the flow that maximizes the regularized utility will assign a nonzero flow to the path.  So the path price would need to converge to $+\infty$ to drive the flow to zero, which is necessary in some situations to prevent deadlock, as in Section 
\ref{sec:flow_control_example} based on \cite{sivaraman2021effect}.

\subsection{The Incentive of Channels}
The actions of the channels as prescribed by the DEBT control protocol can be summarized as follows. Channels adjust their prices in proportion to the net flow through them. They rebalance themselves whenever necessary and execute any transaction request that has been made of them. We discuss both these aspects below.

\paragraph{On Prices}
In this work, the exclusive role of channel prices is to ensure that the flows through each channel remains balanced. In practice, it would be important to include other components in a channel's price/fee as well: a congestion price  and an incentive price. The congestion price, as suggested by \cite{varma2021throughput}, would depend on the total flow of transactions through the channel, and would incentivize nodes to balance the load over different paths. The incentive price, which is commonly used in practice \cite{river2023lightning}, is necessary to provide channels with an incentive to serve as an intermediary for different channels. In practice, we expect both these components to be smaller than the imbalance price. Consequently, we expect the behavior of our protocol to be similar to our theoretical results even with these additional prices.

We have assumed all channels use the same step size $\gamma$ in updating their prices and feel this is not difficult to enforce in practice.   However, if different channels used sufficiently small but different step sizes the updates would be quasi-Newton gradient updates with constant diagonal scaling matrix and would still achieve global convergence.

A key aspect of our protocol is that channel fees are allowed to be negative. Although the original Lightning network whitepaper \cite{poon2016bitcoin} suggests that negative channel prices may be a good solution to promote rebalancing, the idea of negative prices in not very popular in the literature. To our knowledge, the only prior work with this feature is \cite{varma2021throughput}. Indeed, in papers such as \cite{van2021merchant} and \cite{wang2024fence}, the price function is explicitly modified such that the channel price is never negative. The results of our paper show the benefits of negative prices. For one, in steady state, equal flows in both directions ensure that a channel doesn't loose any money (the other price components mentioned above ensure that the channel will only gain money). More importantly, negative prices are important to ensure that the protocol selectively stifles acyclic flows while allowing circulations to flow. Indeed, in the example of Section \ref{sec:flow_control_example}, the flows between nodes $A$ and $C$ are left on only because the large positive price over one channel is canceled by the corresponding negative price over the other channel, leading to a net zero price.

Lastly, observe that in the DEBT control protocol, the price charged by a channel does not depend on its capacity. This is a natural consequence of the price being the Lagrange multiplier for the net-zero flow constraint, which also does not depend on the channel capacity. In contrast, in many other works, the imbalance price is normalized by the channel capacity \cite{ren2018optimal, lin2020funds, wang2024fence}; this is shown to work well in practice. The rationale for such a price structure is explained well in \cite{wang2024fence}, where this fee is derived with the aim of always maintaining some balance (liquidity) at each end of every channel. This is a reasonable aim if a channel is to never rebalance itself; the experiments of the aforementioned papers are conducted in such a regime. In this work, however, we allow the channels to rebalance themselves a few times in order to settle on a detailed balance flow. This is because our focus is on the long-term steady state performance of the protocol. This difference in perspective also shows up in how the price depends on the channel imbalance. \cite{lin2020funds} and \cite{wang2024fence} advocate for strictly convex prices whereas this work and \cite{varma2021throughput} propose linear prices.

\paragraph{On Rebalancing} 
Recall that the DEBT control protocol ensures that the flows in the network converge to a detailed balance flow, which can be sustained perpetually without any rebalancing. However, during the transient phase (before convergence), channels may have to perform on-chain rebalancing a few times. Since rebalancing is an expensive operation, it is worthwhile discussing methods by which channels can reduce the extent of rebalancing. One option for the channels to reduce the extent of rebalancing is to increase their capacity; however, this comes at the cost of locking in more capital. Each channel can decide for itself the optimum amount of capital to lock in. Another option, which we discuss in Section \ref{sec:five_node}, is for channels to increase the rate $\gamma$ at which they adjust prices. 

Ultimately, whether or not it is beneficial for a channel to rebalance depends on the time-horizon under consideration. Our protocol is based on the assumption that the demand remains steady for a long period of time. If this is indeed the case, it would be worthwhile for a channel to rebalance itself as it can make up this cost through the incentive fees gained from the flow of transactions through it in steady state. If a channel chooses not to rebalance itself, however, there is a risk of being trapped in a deadlock, which is suboptimal for not only the nodes but also the channel.

\section{Conclusion}
This work presents DEBT control: a protocol for payment channel networks that uses source routing and flow control based on channel prices. The protocol is derived by posing a network utility maximization problem and analyzing its dual minimization. It is shown that under steady demands, the protocol guides the network to an optimal, sustainable point. Simulations show its robustness to demand variations. The work demonstrates that simple protocols with strong theoretical guarantees are possible for PCNs. 

\printbibliography
\end{document}